\begin{document}
\def\E{\mathbb E}
\def\tloc{T}
\def\noise{\varrho}
\def\edi#1{{\color{black}#1}}
\def\beh#1{{\color{blue}#1}}
\def\red#1{{{\color{red}#1}}}
\def\roh#1{{{\color{red}#1}}}
\def\fskip#1{}
\def\dmin{d_{\min}^+}
\def\dmax{d_{\max}^+}
\def\P{\mathbb P}
\def\G{\mathcal G}
\def\sing{\sigma}
\def\pstar{\mathcal P^*}
\def\R{\mathbb{R}}
\def\boldelta{\mathbf\Delta}
\def\N{\mathbb N}
\def\allone{\mathbf 1}
\def\sgd{\widetilde \nabla}
\newcommand{\norm}[1]{\left\lVert#1\right\rVert}
\newcommand{\doubleperp}{\perp\!\!\!\perp}
%%
%% The "title" command has an optional parameter,
%% allowing the author to define a "short title" to be used in page headers.
\title{Connectivity-Aware Semi-Decentralized Federated Learning over Time-Varying D2D Networks}

%%
%% The "author" command and its associated commands are used to define
%% the authors and their affiliations.
%% Of note is the shared affiliation of the first two authors, and the
%% "authornote" and "authornotemark" commands
%% used to denote shared contribution to the research.
\author{Rohit Parasnis}
% \authornote{Both authors contributed equally to this research.}
\authornotemark[1]
\affiliation{%
  \institution{Purdue University}
  \streetaddress{610 Purdue Mall}
  \city{West Lafayette}
  \state{Indiana}
  \country{USA}
  \postcode{47907-2040}
}
\email{rparasni@purdue.edu}

\author{Seyyedali Hosseinalipour}
\affiliation{%
  \institution{University at Buffalo--SUNY}
  \streetaddress{12 Capen Hall}
  \city{Buffalo}
  \state{New York}
  \country{USA}}
\email{alipour@buffalo.edu}

\author{Yun-Wei Chu}
\affiliation{%
  \institution{Purdue University}
  \streetaddress{610 Purdue Mall}
  \city{West Lafayette}
  \state{Indiana}
  \country{USA}
  \postcode{47907-2040}
}
\email{chu198@purdue.edu}

\author{Mung Chiang}
\affiliation{%
  \institution{Purdue University}
  \city{West Lafayette}
  \country{USA}}
\email{chiang@purdue.edu}

\author{Christopher G. Brinton}
\affiliation{%
  \institution{Purdue University}
  \streetaddress{610 Purdue Mall}
  \city{West Lafayette}
  \country{USA}}
\email{cgb@purdue.edu}

%%
%% By default, the full list of authors will be used in the page
%% headers. Often, this list is too long, and will overlap
%% other information printed in the page headers. This command allows
%% the author to define a more concise list
%% of authors' names for this purpose.
\renewcommand{\shortauthors}{Parasnis, Hosseinalipour, Chu, Brinton, Chiang.}

%%
%% The abstract is a short summary of the work to be presented in the
%% article.
\begin{abstract}
  Semi-decentralized federated learning blends the conventional device-to-server (D2S) interaction structure of federated model training with localized device-to-device (D2D) communications. We study this architecture over practical edge networks with multiple D2D clusters modeled as time-varying and directed communication graphs. Our investigation results in an algorithm that controls the fundamental trade-off between (a) the rate of convergence of the model training process towards the global optimizer, and (b) the number of D2S transmissions required for global aggregation. Specifically, in our semi-decentralized methodology, D2D consensus updates are injected into the federated averaging framework based on column-stochastic weight matrices that encapsulate the connectivity within the clusters. To arrive at our algorithm, we show how the expected optimality gap in the current global model depends on the greatest two singular values of the weighted adjacency matrices (and hence on the densities) of the D2D clusters. We then derive tight bounds on these singular values in terms of the node degrees of the D2D clusters, and we use the resulting expressions to design a threshold on the number of clients required to participate in any given global aggregation round so as to ensure a desired convergence rate. Simulations performed on real-world datasets reveal that our connectivity-aware algorithm reduces the total communication cost required to reach a target accuracy significantly compared with baselines depending on the connectivity structure and the learning task.% \textbf{what percent??} for different cost budgets and link failure probabilities when compared to FedAvg and our time-varying extension of COLREL-CLIENT~\cite{yemini2022semi} while at the same time reaching very similar accuracy levels at every global aggregation round. % This is it achieved by shifting in part the onus of model aggregation from the central coordinator to the edge network. 
\end{abstract}

%%
%% The code below is generated by the tool at http://dl.acm.org/ccs.cfm.
%% Please copy and paste the code instead of the example below.
%%
\begin{CCSXML}
<ccs2012>
   <concept>
       <concept_id>10003033.10003083.10003090.10003091</concept_id>
       <concept_desc>Networks~Topology analysis and generation</concept_desc>
       <concept_significance>500</concept_significance>
       </concept>
   <concept>
       <concept_id>10010520.10010521.10010537</concept_id>
       <concept_desc>Computer systems organization~Distributed architectures</concept_desc>
       <concept_significance>500</concept_significance>
       </concept>
 </ccs2012>
\end{CCSXML}
\ccsdesc[500]{Computer systems organization~Distributed architectures}
\ccsdesc[500]{Networks~Topology analysis and generation}

\ccsdesc[500]{Computer systems organization~Peer-to-peer architectures}

%%
%% Keywords. The author(s) should pick words that accurately describe
%% the work being presented. Separate the keywords with commas.
\keywords{connectivity, semi-decentralized, federated learning}
%% A "teaser" image appears between the author and affiliation
%% information and the body of the document, and typically spans the
%% page.
\begin{comment}
    \begin{teaserfigure}
  \includegraphics[scale=0.]{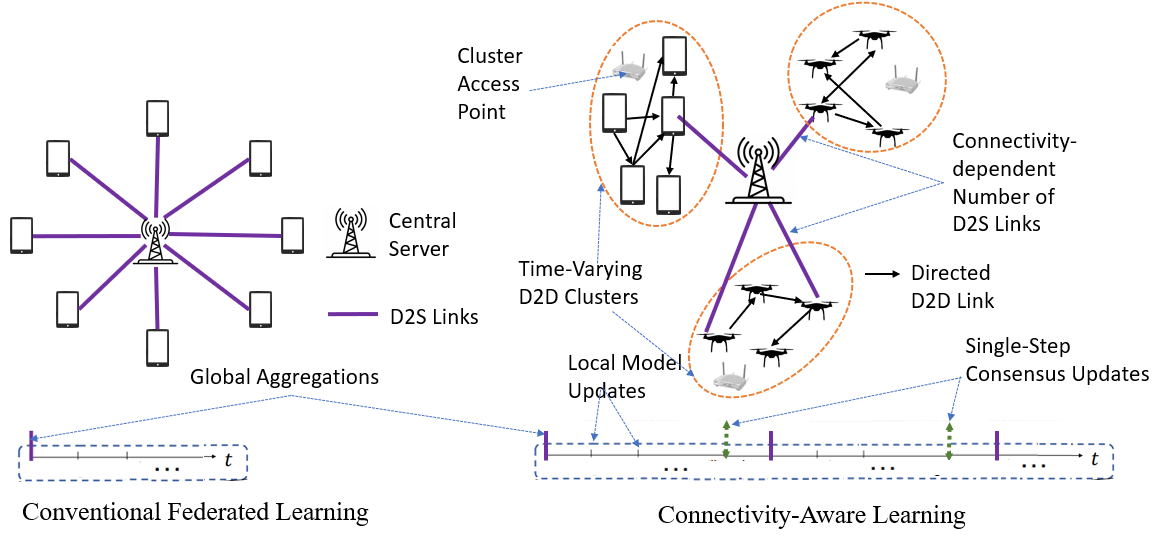}
  %\caption{Seattle Mariners at Spring Training, 2010.}
  \Description{Enjoying the baseball game from the third-base
  seats. Ichiro Suzuki preparing to bat.}
  \label{fig:teaser}
\end{teaserfigure}
\end{comment}
%\received{20 February 2007}
%\received[revised]{12 March 2009}
%\received[accepted]{5 June 2009}

%%
%% This command processes the author and affiliation and title
%% information and builds the first part of the formatted document.
\maketitle

\section{Introduction}
%\todo{Remove taus from rthe diagran}
Federated learning (FL)~\cite{konevcny2016federated, mcmahan2017communication} is a popular paradigm for distributing machine learning (ML) tasks over a network of centrally coordinated devices. By not requiring the devices to share any training data with the central coordinator (server), FL improves privacy and communication efficiency. The first FL technique, known as federated averaging (FedAvg), was proposed in~\cite{konevcny2016federated, mcmahan2017communication} as a distributed optimization algorithm for a ``star'' topology-based network architecture. In each iteration of the FedAvg algorithm, (i) devices individually performs a number of local stochastic gradient descent (SGD) iterations and transmit their cumulative stochastic gradients to the central server, which then (ii) aggregates a random subset of these gradients to estimate the globally optimal ML model.  In recent years, several variants of FedAvg have been proposed to address the challenges encountered by FL at the wireless edge, including different dimensions of heterogeneity in dataset statistics (e.g., varying local data distributions) and in the network system itself (e.g., varying communication and computation capabilities). %in centrally coordinated, semi-decentralized, and fully decentralized or serverless settings (e.g.,~\cite{reisizadeh2020fedpaq,al2022decentralized,lin2021federated,hosseinalipour2022multi,yemini2022semi,wang2022confederated}; see~\cite{li2020federated} and~\cite{beltran2022decentralized} for surveys on some well-known variants).

An emerging arch of work has been exploring FL under edge networks that diverge from the star learning topology between the devices and the server. This had led to varying degrees of decentralization in FL, reaching fully decentralized, serverless settings that sit at the opposite extreme of the star topology~\cite{lalitha2018fully,sun2022decentralized,zehtabi2022event, beltran2022decentralized,hua2022efficient,li2020federated, lin2021federated}. In between these two extremes is \textit{semi-decentralized FL}, where device-to-device (D2D) communications complement device-to-server (D2S) interactions~\cite{lin2021semi,yemini2022semi,hosseinalipour2022multi,briggs2020federated}. These D2D interactions occur locally within \textit{clusters} of devices, with each cluster forming a connected component. In semi-decentralized FL, D2D transmissions are less energy-consuming than D2S interactions and can help reduce the frequency of D2S communications through localized synchronizations of the ML model updates. %Cooperative consensus formation of model parameter vectors over the available D2D topology has been proposed as a mechanism for local aggregations.

%Tuning this frequency, however, is a non-trivial task for the following reasons. 

Despite these recent investigations, we still do not have a clear understanding of how different D2D topology properties impact the learning process. For instance, the ratio of the number of D2D interactions to that of D2S interactions will impact the training efficiency differently over different topologies. This becomes especially important in the presence of constraints such as upload/download bandwidths, and stochastic uncertainties such as data heterogeneity, client mobility, and communication link failures.
\begin{figure}
    \centering
    \includegraphics[scale=0.35]{}
    \caption{Conventional Federated Learning vs. Connectivity-Aware Semi-Decentralized Learning Architecture}
    \label{fig:architecture}
\end{figure}
%If not enough D2S interactions occur, then edge devices in highly clustered networks 
On one hand, edge devices in clustered D2D networks that have little to no cross-cluster interactions
are typically in contact with only a small fraction of the rest of the network at any given time instant (e.g., networks of unmanned aerial vehicle (UAV) swarms spread over geo-distributed regions separated by long distances). In such networks, if there is no central coordinator (implying zero D2S interactions) and if the training data are distributed heteregeneously among the edge devices, no practically feasible number of D2D interactions
is likely to aggregate a set of local ML models that are diverse enough to approximate the global data distribution%in the absence of a central server. 
% This is likely to result in a poor rate of convergence to the globally optimal model without reasonably frequent D2S interactions for global aggregations, particularly when there are high levels of data heterogeneity between clusters~% which is likely to slow down the model training process significantly%in fully decentralized learning architectures
~\cite{bellet2022d}.

On the other hand, having a high number of D2D interactions is advantageous when D2S interactions take the form of high-latency, high-energy transmissions (e.g., if the UAV swarms in the previous example are miles away from the nearest base station). Moreover, classical star-topology-based FL architectures miss out on an important benefit
of D2D cooperation: devices acting as information relays between other devices and the server, effectively sharing with the server more information than it would expect to receive. %As an example, in networks
%with dense cluster subgraphs that promote , the server might need to sample just a small fraction of the devices in each cluster to obtain an accurate global aggregation. %proponents of fully decentralized learning algorithms such as~\cite{zehtabi2022event,hua2022efficient,beltran2022decentralized,li2020federated, lin2021federated} argue that high frequencies of D2S interactions lead to high energy consumption as well as undesirably slow model training, owing to the high-latency uplink transmissions from the edge devices to the server. 
% Moreover, D2S interactions being too frequent implies an under-reliance on the primary benefit of D2D cooperation: gathering information from clients that the server is not in direct contact with. As an example, in networks with dense clusters, the server might need to sample just a small fraction of the devices in each cluster in order to aggregate all of the model updates generated by the nodes in the  cluster. 

%As a result, the following question arises naturally: can we tune the degree of decentralization in a given distributed learning architecture in order to reduce the number of expensive D2S transmissions without compromising on the convergence rate of the algorithm of interest? \todo{"Degree of freedom to mediate between semi- and fully-}
 %\todo{To Ali: the figure is too small now...can I make it 2-column again?}
We are thus motivated to conduct a formal study of semi-decentralized FL, and reveal the combined impact of D2S and D2D interactions on the training process. After building an understanding of the D2D topologies on which the D2D interactions occur, we propose a novel FL technique that enables us to take into account the degree distributions of the D2D clusters and use this knowledge to tune the number of expensive D2S transmissions  while simultaneously ensuring a minimum rate of global training convergence. As shown in Fig.~\ref{fig:architecture}, we incorporate two scales of model aggregations: on the first scale, the edge devices perform intra-cluster model aggregations with their one-hop neighbors via distributed averaging, and on the second scale, a central server samples a random set of clients (as in the classical FedAvg architecture~\cite{mcmahan2017communication}) for global aggregation. %As we will see, the novelty of our algorithm lies in the incorporation of a client-sampling threshold that depends on the degree distributions of the D2D clusters. %via certain upper bounds that we are able to derive on the top two singular values of the weight matrices associated with each cluster.

%, and  (b) the use of time-varying \textit{equal-neighbor }\textit{column}-stochastic rather than row-stochastic consensus matrices at the cluster level, which is motivated by the fact that column-stochastic matrices are average-preserving unlike their row-stochastic counterparts.

%Although the formal description of our model will not specify the relative sizes of these space-scales, we expect the first spatial scale to be the smaller of the two, because we expect the node degree of each client to be a tiny fraction of the network size (i.e., the total number of clients). On the other hand, we expect the  central server to be able to sample a large subset of the client population in every global aggregation round.

Our methodology has several potential use-cases, including the following that we will refer to as examples throughout the paper:
\begin{enumerate}
    \item \textbf{\textit{UAV Networks for ISR:}} UAVs are being increasingly deployed for intelligence, surveillance, and reconnaissance (ISR) operations in defense settings~\cite{wang2017uncertain,shen2008game}. With the UAVs partitioned into D2D-enabled swarms deployed across different areas, our connectivity-aware algorithm can facilitate intra-cluster communications and reduce the over-reliance of the model training process on D2S transmissions.
    %on these networks after partitioning them into clusters, either on the basis of their coverage areas or their communication and computation capabilities, 
    %in such a way that the UAVs belonging to the same cluster have similar communication and computation capabilities, their coordinated movements over the coverage area would result in the communication digraphs of the clusters being approximately homogeneous and dense.
    \item\textbf{\textit{Self-driving cars:}} Many learning tasks for self-driving cars call for vehicles to communicate over short distances. In such settings, geographical proximity can be used to partition the traffic network into clusters, which would enable us to design intra-cluster D2S communications that turn out to be more efficient than D2S communications with a far-away server.
    %    \item Agriculture\textbf{ ?? Think about this}
\end{enumerate}
\subsection{Summary of Contributions}
Our key contributions are summarized below:
\begin{enumerate}
    \item \textbf{\textit{Analysis with Time-Varying and Directed Cluster Topologies:}} We consider that each D2D cluster in general is a time-varying directed graph (digraph). We show how the expected optimality gap of the learning process depends on the greatest two singular values of the weighted adjacency matrices used for local aggregations in the clusters. Our analysis is applicable to edge networks with asymmetric D2D communications subjected to  link failures.%\todo{To Chris, Ali: is this claim justifiable if the clusters are assumed to be dense?}
    \item \textbf{\textit{Singular Value Bounds in terms of Node Degrees:}} 
    We derive bounds on the  singular values of the cluster-specific weighted adjacency matrices in terms of the degree distribution of every cluster. This introduces new technical challenges as described in Section 1.2, since it is a stark departure from existing analyses of consensus-based FL algorithms that rely heavily on the spectral gaps of symmetric weight matrices (e.g., see~\cite{koloskova2021improved,lin2021semi,xing2021federated,koloskova2020unified,beznosikov2021decentralized,zehtabi2022event,hosseinalipour2022multi}).
    \item \textbf{\textit{Connectivity-Aware Learning Algorithm:}} We
    use our singular value bounds to design a time-varying  threshold on the number
    of clients required to be sampled by the central server for global
    aggregation so as to enforce a desired convergence rate while simultaneously reducing the number of D2S communications. This tradeoff results in a novel connectivity-aware algorithm with significant energy savings, as validated subsequently by our numerical results.
    \item \textbf{\textit{Effect of Data Heterogeneity under Mild Gradient Diversity Assumptions:}} We derive a bound on the expected optimality gap that captures the effects of cluster densities as well as the extent of data heterogeneity across the devices. In doing so, we employ a milder definition of gradient diversity~\cite{lin2021semi} than what is typically assumed in literature.
\end{enumerate}

\textbf{\textit{Notation:}} We denote the set of real numbers by $\R$ and the set of positive integers by $\N$. For any $n\in\N$, we define $[n]:=\{1,2,\ldots, n\}$. For a finite set $S$, we denote its cardinality by $|S|$.
  
  We denote the vector space of $n$-dimensional real-valued column vectors by $\R^n$. We use the superscript notation $^\top$ to denote the transpose of a vector or a matrix. All matrix and vector inequalities are assumed to hold entry-wise. We use $I$ to denote the identity matrix (of the known dimension) and $\allone$ to denote the column vector (of the known dimension) that has all entries equal to 1. Similarly, $\mathbf 0$ denotes the all-zeroes vector. In addition, we use $\|\cdot\|$ to denote the Euclidean norm of a square matrix or a vector, and for any vector $v\in\R^n$ we use $\text{diag}(v)$ to denote the diagonal matrix whose $i$-th diagonal entry is $v_i$.  % If $m$ and $n$ are integers such that $m>n$, and if $M\in\R^{m\times m}$, then we let $M_{[n]}\in\R^{n\times n}$ be the sub-matrix of $M$ obtained by deleting the last $m-n$ rows and the last $m-n$ columns of $M$.
  
  We say that a vector $v\in\R^n$ is \textit{stochastic} if $v\geq 0$ and $v^\top\allone = 1$, and a matrix $A$ is \textit{column-stochastic} if $A$ is non-negative and if each column of $A$ sums to 1, i.e., if $A\geq 0$ and $A^{\top}\allone=\allone$. %A column-stochastic matrix $A$ is \textit{doubly stochastic} if each row of $A$ sums to 1, i.e., if $A\geq 0$ and $A^\top\allone=A\allone=\allone$.

\subsection{Related Work}
% Although ``decentralized  FL'' and ``semi-decentralized learning'' are recently coined phrases, the underlying learning paradigm is founded on the principles of distributed optimization, which dates back to the pioneering work~\cite{tsitsiklis1986distributed}, where decentralized algorithms involving asynchronous communications were proposed with the goal of minimizing smooth cost functions. Since then, a large variety of distributed optimization algorithms have been proposed for a range of different applications and communication topologies (see, for example,~\cite{nedic2009distributed,duchi2011dual,shi2015extra,mokhtari2016decentralized,nedic2017achieving}).

 % Since most of these approaches work only when either the global loss function is assumed to be smooth or when the data is distributed independently and identically (i.i.d.) among the participating devices, the relatively recent work~\cite{he2018cola} introduced the decentralized \textit{federated} learning paradigm with general non-smooth loss functions and heterogeneous/non-i.i.d. local data distributions~\cite{beltran2022decentralized}. 
 
Several different FL approaches with varying degrees of decentralization have been proposed to date. In this section, we focus on those which are most relevant to the present work.

 \textit{\textbf{Semi-decentralized FL:}}~\cite{lin2021semi} proposes a semi-decentralized learning methodology in which the D2D network is partitioned into clusters, as in our paper. 
 The key differences between~\cite{lin2021semi} and the present work are (a) we do not assume the D2D communications to be bidirectional (equivalently, the cluster graphs in our model are not undirected), and (b) our analysis uses column-stochastic consensus matrices that need not satisfy the standard but unrealistic assumption of symmetry (which leads to double stochasticity and may not hold if the cluster graphs are directed). This leads to two significant technical challenges. First, we cannot use standard eigenvalue results in our analysis since we must focus on singular values, which generally differ from eigenvalues for asymmetric matrices. Second, unlike doubly stochastic matrices, column-stochastic aggregation matrices in general do not ensure convergence to consensus in the absence of a central coordinator, which means our analysis must account for the combined effect of global aggregations and column-stochasticity. We address these challenges in this work.
 
 Another closely related semi-decentralized learning methodology is~\cite{yemini2022semi}. In~\cite{yemini2022semi}, the goal is to enable edge devices to compute weighted sums of their neighbors' scaled cumulative gradients in order to reduce the dependence of the global training process on unreliable D2S links.~\cite{yemini2022semi}, however, assumes the D2D communication network to be time-invariant and undirected, thereby disregarding potential communication link failures and client mobility. %and the network design procedure proposed therein can be shown to require every client to coordinate with its two-hop neighbors. 
 %\todo{Cite the paper that does what we do but only when the clusters are cliques whose regional data represents the global data}
 
 % In addition, there exist some recently proposed decentralized learning approaches that use either the idea of network clustering or that of consensus updates. 

 \textit{\textbf{Learning over Clustered D2D Networks:}} Recently,~\cite{al2022decentralized} proposed fully decentralized learning over D2D networks in which a small subset of nodes act as bridges between different clusters for cross-cluster model transmission, thereby obviating the role of a server. Their topology design, however, results in a static rather than a dynamic D2D network. Reference~\cite{briggs2020federated} also focuses on clustered networks, but it provides a semi-decentralized learning methodology where the basis for clustering is data similarity, whereas our methodology makes no assumptions on the basis for clustering. A complementary approach is proposed in~\cite{bellet2022d}, where every cluster is assumed to be a clique and the D2D network is partitioned in such a way that each local dataset is representative of the global data. Network clusters also form the focus of another recent work,~\cite{wang2022confederated}, which proposes having one edge server per cluster so as to eliminate the need for a central server. Its learning algorithm assumes the edge network topology to be undirected, which gives rise to a symmetric adjacency matrix.
%based on the Alternating Direction Method of Multipliers (ADMM) and 

 \textit{\textbf{Other Consensus-based Algorithms:} } Reference~\cite{koloskova2021improved} provides improved bounds on the convergence rates of certain gradient tracking methods used in decentralized learning by enhancing the analysis of the consensus matrix (referred to as the \textit{mixing matrix} therein) and its spectral gap. However, similar to ~\cite{lin2021semi}, this work assumes the consensus matrix to be row-stochastic as well as symmetric, and hence, doubly stochastic.  %In fact, most of the above works assume the consensus matrices to be either static, symmetric, or doubly stochastic. Therefore, 
 In this respect,~\cite{he2019central} relaxes the assumptions of symmetry as well as double stochasticity in an online learning setting. However, the matrices therein are row-stochastic, which are not average-preserving and hence, they are not as suitable as column-stochastic matrices for minimizing the average of all the local loss functions. Finally, we remark that there exists abundant literature on distributed optimization over time-varying digraphs characterized by consensus matrices that are not necessarily doubly stochastic (e.g., see~\cite{liang2019dual,xin2018linear,akbari2015distributed,nedic2017achieving,nedic2014distributed,nedic2016stochastic,wang2019edge}). However, the effects of both data heterogeneity (or non-i.i.d. data distribution) and graph-theoretic properties (such as the degree distribution of the network in question) on the convergence rate of these algorithms have remained largely unexplored.

\section{Semi-Decentralized FL Setup}\label{sec:prob_form}
We now introduce the system model, the learning objective, and the network model in semi-decentralized FL.

\subsection{System Model and Learning Objectives}
We consider a collaborative learning environment consisting of $n$ edge devices, or \textit{clients}, and a central parameter server (PS) that is tasked with aggregating all the local model updates generated by the clients. We use $[n]$ to denote the set of clients.

 Each client $i\in[n]$ has a local dataset $\mathcal D_i$, which is a collection of data samples of the form $\xi=(u,y)$ where $u\in\R^p$ is the \textit{feature vector} of the sample and $y$ is its \textit{label}. On this basis, for any model $x\in\R^p$, we define the \textit{loss function} $L:\R^p\times \cup_{i=1}^n \mathcal D_i\to\R$ so that $ L(x;\xi)$  denotes the loss  incurred by $x$ on a sample $\xi\in \cup_{i=1}^n \mathcal D_i$ (where $\cup_{i=1}^n\mathcal D_i$ is the global dataset). The average loss incurred by $x$ over the local dataset of client $i$ is given by  $
    f_i(x):=\frac{1 }{|\mathcal D_i|} \sum_{\xi \in \mathcal D_i} L(x;\xi),
 $
 where $f_i:\R^p\to \R$ denotes the \textit{local loss function} of client $i$.
 
 In collaboration with the PS, the clients seek to minimize the \textit{global loss function} $f:\R^p\to\R$, defined as the unweighted arithmetic mean
 $
    f(x):=\frac{1}{n}\sum_{i=1}^n f_i(x)
 $
  of all the local loss functions. The learning objective, therefore, is to determine the \textit{global optimum} $x^*:= \arg\min_{x\in\R^p} f(x)$.
 
 \subsection{D2D and D2S Network Models}\def\dinmax{d_{\max}^{-}}
 We model two types of interactions among the network elements: (i) D2S and (ii) D2D. For D2S interactions, the devices can engage in uplink communications to the PS if prompted by the server, which happens through a sampling procedure explained later.%\todo{No need to talk about D2S links} 
 
 We model the D2D network as a time-varying directed graph ${G(t)=([n],E(t))}$, where $[n]$ denotes the \textit{vertex set} and $E(t)$ the \textit{edge set} of the digraph. The existence of a directed edge from a node $i\in[n]$ to another node $j\in[n]$ in $G(t)$ denotes the existence of a communication link from the $i$-th client to the $j$-th client in the D2D network. In this case, we refer to client $i$ (respectively, client $j$) as the in-neighbor (respectively, out-neighbor) of client $j$ (respectively, client $i$). The set of in-neighbors (respectively, out-neighbors) of a client $i\in [n]$ at time $t$ is denoted by $\mathcal N_i^-(t)$ (respectively, $\mathcal N_i^+(t)$). The number of in-neighbors (respectively, out-neighbors) is called the in-degree (respectively, out-degree) and is denoted by $d_i^-(t)$ (respectively, $d_i^+(t)$). We let $\dinmax(t)$, $\dmin(t)$, and $\dmax(t)$ denote the maximum in-degree, the minimum out-degree, and the maximum out-degree, respectively. 
 
 Unlike standard works on distributed learning~\cite{nedic2014distributed,nedic2016stochastic,nedic2017achieving,akbari2015distributed}, we do not assume the D2D network to be strongly connected or even uniformly strongly connected~\cite{nedic2014distributed,nedic2016stochastic} over time. This gives rise to a number $c>1$ of strongly connected components of $G(t)$, denoted $\{(V_1(t),E_1(t)),(V_2(t),E_2(t)),\ldots, (V_c(t),E_c(t))\}$ which we refer to as  \textit{clusters} of the D2D network. Here, we make the following mild assumptions that apply to many cellular networks:
 \begin{enumerate}
     \item The number of clusters, $c$, is time-invariant.
     \item There does not exist any communication link between any two clusters. In other words, $E(t)=\cup_{\ell=1}^c E_\ell(t)$.
     \item Regardless of any movement of clients from one cluster to another over time, %\todo{discuss tomorrow!}, 
     as of time $t$, the server has full knowledge of the vertex sets $\{V_\ell(t)\}_{\ell=1}^c$ of all the $c$ clusters.
 \end{enumerate}
 The third condition is satisfied in practice since the base station (which acts as the PS) is aware of the users in its coverage area.

\section{Proposed Method for Connectivity-Aware Learning}
 We now present our methodology for connectivity-aware learning over the semi-decentralized setup from Sec. 2. Our technique will enable the central server to use limited knowledge of the cluster degree distributions to tune a communication-efficiency trade-off. %We now introduce the main components of our algorithm.
 
 \subsection{Local Model Updates}
 As in many FL schemes, we assume every client performs multiple rounds of local SGD iterations between any two consecutive rounds of global aggregation. Let $x^{(t)}$ denote the global model that all the clients possess at the end of the $t$-th round of global aggregation. Then, each client $i\in[n]$ performs $\tloc\in\N$ iterations of local SGD. In other words, for each $k\in\{0,1,\ldots, \tloc-1\}$, we have
 \begin{align}
     x_i^{(t,k+1)} = x_i^{(t,k)} - \eta_t\sgd f_i(x_i^{(t,k)}),
 \end{align}
 where $\eta_t>0$ is the learning rate or the step-size, and $\sgd f_i(x) :=\frac{1}{|\chi_i|} \sum_{\xi\in \chi_i} 
 \nabla L(x;\xi)$ is the stochastic gradient computed by client $i$ by sampling a  \textit{mini-batch} or a random subset $\chi_i\subset\mathcal D_i$ of its local samples. Note that $x_i^{(t,0)}:=x^{(t)}$.

 %In the remainder of the paper, we drop the subscript $_\text{loc}$ whenever the notation is clear from the context.
 \subsection{Intra-Cluster Model Aggregations}     %\todo{To Chris, Ali: I've mentioned access points but not microcells or macrocells. Is that fine?}
 The next step involves all the clients aggregating their scaled cumulative gradients with their neighbors.
 This aggregation takes the form of  weighted sums.
 Every client $i\in[n]$ first transmits its\textit{ scaled cumulative stochastic gradient} %scaled by the reciprocal of its out-degree, given by $\frac{1}{d_i^+(t)}\left(x_i^{(t,T)}-x^{(t)}\right)$, 
 $x_i^{(t,T)}-x^{(t)} = -\eta_t\sum_{k=0}^{T-1} \sgd f_i(x_i^{(t,k)})$ to each of its out-neighbors $j\in\mathcal N_i^+(t)$ before the $t$-th global aggregation round.
 To facilitate this, we assume that every cluster $\ell\in[c]$ contains an access point  to which every client $i\in V_\ell(t)$ sends a list of its in-neighbors (clients whose gradients $i$ has received). The access point then announces the end of the concerned D2D communication round, determines the out-degree sequence $\{d_j^+(t):j\in V_\ell(t)\}$ of the cluster, and broadcasts this sequence to every client in the cluster.
 
 Subsequently, the client computes the following weighted sum of all the scaled cumulative gradients it receives from its in-neighbors:%\todo{change $\theta(t)$ to a better notation}
 \begin{align}
    \Delta_i(t) = \sum_{j\in\mathcal N_i^{-}(t)} \frac{1 }{d_j^+(t) }\left(x_j^{(t,T)}-x^{(t)} \right).
 \end{align}
 \newtheorem{fact}{Fact}
 This rule can be expressed compactly in matrix form  as %\todo{Multiple consensus updates }
 \begin{align}
     \boldelta(t) = A(t)X_{\text{diff} }^\top(t),
 \end{align}
 where $\boldelta(t) := \left[\Delta_1(t)\,\,\,\Delta_2(t)\,\,\,\cdots\,\,\,\Delta_n(t)\right]^\top$,\\ $X_{\text{diff} }(t):=\left[x_1^{(t,T)}-x^{(t)}\,\,\, x_2^{(t,T)}-x^{(t)}\,\,\,\cdots\,\,\,x_n^{(t,T)}-x^{(t)} \right]$, and \\${A(t)\in \R^{n\times n}}$ is a matrix whose $(i,j)$-th entry equals $a_{ij}(t)=\frac{1}{d_j^+(t)}$ for all $i\in[n]$ and $j\in \mathcal N_i^-(t)$. 
 
 \begin{fact}
    $A(t)$ is a column-stochastic matrix because the following holds for all $j\in[n]$:
 $$
    \sum_{i=1}^n a_{ij}(t) = \sum_{i\in [n]: j\in \mathcal N_i^-(t)}\frac{1 }{ d_j^+(t) }= \sum_{i\in\mathcal N_j^+(t)} \frac{1}{|\mathcal N_j^+(t) |} = 1.
 $$
 It can be verified that $A(t)$ is a block-diagonal matrix with its blocks $\{A_\ell(t)\}_{\ell=1}^c$ being the equal-neighbor adjacency matrices of the $c$ clusters in the D2D network.
 \end{fact}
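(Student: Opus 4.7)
The plan is to verify the two claims separately: column-stochasticity by a direct re-indexing of each column sum, and block-diagonality as an immediate consequence of the assumption (from Section~\ref{sec:prob_form}) that no edge of $G(t)$ crosses two distinct clusters.

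First I would fix an arbitrary column index $j\in[n]$ and unpack the definition of $a_{ij}(t)$. By construction, $a_{ij}(t)=1/d_j^+(t)$ when $j\in\mathcal N_i^-(t)$ and $a_{ij}(t)=0$ otherwise, so only the non-zero terms contribute to the $j$-th column sum. The key observation is the edge-reversal duality $j\in\mathcal N_i^-(t)\Longleftrightarrow i\in\mathcal N_j^+(t)$: a directed edge from $j$ to $i$ makes $j$ an in-neighbor of $i$ and $i$ an out-neighbor of $j$. Using this equivalence to re-index the summation, the set $\{i\in[n]:j\in\mathcal N_i^-(t)\}$ coincides with $\mathcal N_j^+(t)$, whose cardinality is $d_j^+(t)$ by definition. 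Each surviving summand equals $1/d_j^+(t)$, so the column sum is $d_j^+(t)\cdot(1/d_j^+(t))=1$. Non-negativity of $A(t)$ is automatic since out-degrees are non-negative, and this recovers the chain of equalities displayed in the statement.

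For the block-diagonal structure, I would invoke the inter-cluster non-interaction assumption $E(t)=\bigcup_{\ell=1}^c E_\ell(t)$. Consequently, whenever $i\in V_\ell(t)$ and $j\in V_{\ell'}(t)$ with $\ell\neq\ell'$, there is no edge from $j$ to $i$, so $j\notin\mathcal N_i^-(t)$ and hence $a_{ij}(t)=0$. Re-ordering the vertices so that the indices of each cluster $V_\ell(t)$ are contiguous then exhibits $A(t)$ as a block-diagonal matrix whose $\ell$-th diagonal block $A_\ell(t)$ has entries $1/d_j^+(t)$ exactly on positions $(i,j)$ with $i,j\in V_\ell(t)$ and $j\in\mathcal N_i^-(t)$; this matches the definition of the equal-neighbor adjacency matrix of cluster $\ell$, scaled by the reciprocals of the out-degrees.

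The argument is essentially a bookkeeping verification, so no step is a serious obstacle. The one subtlety I would flag is the implicit non-degeneracy condition $d_j^+(t)\geq 1$ for every $j$ --- without which $a_{ij}(t)$ would be undefined and the $j$-th column would sum to $0$ rather than $1$. In practice this is ensured (as in standard consensus protocols) by treating every client as its own out-neighbor, i.e., by implicitly including self-loops in every $G(t)$, and I would briefly state this convention at the start of the proof.
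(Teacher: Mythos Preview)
Your proposal is correct and follows exactly the same approach as the paper: the displayed chain of equalities in the statement \emph{is} the paper's entire proof of column-stochasticity (you have simply unpacked the re-indexing via the duality $j\in\mathcal N_i^-(t)\Leftrightarrow i\in\mathcal N_j^+(t)$), and the block-diagonal claim is left by the paper as ``it can be verified,'' which you correctly derive from the no-cross-cluster-edge assumption $E(t)=\cup_{\ell=1}^c E_\ell(t)$. Your remark on the implicit requirement $d_j^+(t)\ge 1$ is a fair caveat that the paper does not make explicit; it is guaranteed in the paper's setup whenever each cluster is strongly connected with at least two nodes (or under the self-loop convention you mention).
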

 Henceforth, we refer to $A(t)$ as the \textit{equal-neighbor adjacency matrix} of $G(t)$ because it represents every client $i\in[n]$ transmitting an equal share (a fraction $\frac{1}{d_i^+(t)}$) of its scaled cumulative gradient to its $d_i^+(t)$ out-neighbors.
 
 \subsection{Global Aggregation at the PS}
 % To ensure a high rate of convergence to the global optimum in the likely presence of poor or intermittent D2D connectivity, we use a PS for global model aggregations. At the same time, however, 
 For the global aggregation step, the PS samples a random subset of clients $\mathcal S(t)\subset [n]$. The cardinality $m(t)\le n$ %\todo{Replace $m$ with $m(t)$ in the proofs.} 
 of this set is carefully chosen by our algorithm such that the resulting number of D2S interactions is just enough to complement the intra-cluster aggregations without excessively slowing down the training process. 
 
 Specifically, this involves three broad steps: (a) The PS first learns the degree distribution of each cluster. (b) It then computes an upper bound on an error quantity $\phi(t)$ that captures the combined effect of random sampling and the cluster degree distributions on the convergence rate. (c) It computes the minimum value of $m(t)$ required to keep $\phi(t)$ below a desired threshold. More specifically:
 \begin{enumerate}
     \item For the $(t+1)$-th round of global aggregation, the server uses $m(t)$ (computed in the previous iteration) to select $\left\lceil \left(\frac{m(t) }{n}\right)n_\ell(t)\right\rceil$ clients uniformly at random from the $n_\ell(t):=|V_\ell(t)|$ clients that constitute cluster $\ell\in[c]$. This ensures that every cluster has a representation in the global aggregation that is proportionate to its size. The resulting set of randomly sampled clients is denoted  by $\mathcal S(t)$. The server then updates the global model as follows:
     \begin{align}\label{eq:global_aggregation}
         x^{(t+1)}= x^{(t)}+ \frac{1}{m(t)}\sum_{i\in\mathcal S(t)} \Delta_i(t) =x^{(t)}+ \frac{1}{m(t)}\sum_{i=1}^n\tau_i(t)\Delta_i(t),
     \end{align}
     where $\tau_i(t):=|\{i\}\cap\mathcal S(t)|$ is an indicator random variable that takes the value $1$ when client $i$ is sampled and the value $0$ otherwise. Note that $\sum_{i=1}^n \tau_i(t)=|\mathcal S(t)|=m(t)$.
     \item The current round is now $t \leftarrow t+1$. All the cluster access points  send their respective out-degree sequences to the server. Using this information, the server computes $\alpha_\ell(t):=\frac{1}{n_\ell(t)}\min_{i\in V_\ell(t)}d_i^+(t)$, the \textit{minimum out-degree fraction} of cluster $\ell\in[c]$. The server then uses either of the two sets of singular value bounds that we later derive in Sec. 5 (either~\eqref{eq:first_sing_bound}-\eqref{eq:second_sing_bound} or~\eqref{eq:general_first_sing_bound}-\eqref{eq:general_second_sing_bound}) to compute an upper bound $\psi(m(t),\alpha_1(t),\ldots, \alpha_c(t))$ on the \textit{connectivity factor} affecting the convergence rate. This connectivity factor is defined as 
     \begin{align}\label{eq:conn_factor}
         \phi(t):=\left(\frac{n}{m(t)}-1 \right)\sum_{\ell=1}^c\frac{n_\ell(t)}{n}\phi_\ell(t),
     \end{align}
     where $\phi_\ell(t):=\sing_1^2(A_\ell(t)) + \sing_2^2(A_\ell(t))-1$ depends on the greatest two singular values $\sing_1(A_{\ell}(t))\ge\sing_2(A_{\ell}(t))$ of the equal-neighbor adjacency matrix $A_\ell(t)$ of cluster $\ell$. For the upper bound, we will show that
     \begin{align}
     \psi(m(t),\alpha_1(t),\ldots,\alpha_c(t))=\left(\frac{n}{m(t)}-1\right)\sum_{\ell=1}^c \frac{n_\ell(t)}{n}\psi_\ell(t), 
     \end{align}
     where either of the following holds (with the indexing $(t)$ on the right hand side omitted for brevity):
     \begin{align}
         \psi_\ell(t) &= 1+\varepsilon_\ell+\left(\frac{1}{\alpha_\ell}-1\right)^2 + 2\varepsilon_\ell\left(1 +\frac{2}{\alpha_\ell} - \frac{1}{\alpha_\ell^2} \right),\cr
         \psi_\ell(t) &= 2+2\varphi_\ell\cr
         &\quad-\frac{(1-\varepsilon_\ell)^2(1-\alpha_{-\ell}^2)\left((1-\varepsilon_\ell)^2(1-\alpha_{-\ell}^2)-\alpha_{-\ell}\right)}{n_\ell(\varepsilon_{\text{net,}\ell}+1)\left(\varepsilon_{\text{net,}\ell} - \alpha_{-\ell} + \frac{1}{\alpha_\ell n_\ell} \right )}
     \end{align}
     with $\varepsilon_\ell(t):=\frac{\dmax(t) - \dmin(t) }{\dmin(t)}$, $\varphi_\ell(t):=\frac{\dinmax(t) - \dmin (t)}{\dmin(t)}$, \\
     $\alpha_{-\ell}(t):=\frac{1}{\alpha_\ell(t)}-1$ and $\varepsilon_{\text{net,} \ell}(t)=\varphi_\ell(t) + \frac{\varepsilon_\ell(t)}{\alpha_\ell(t)}$.
     %In essence, $\phi(t)$ captures the combined effect of network connectivity and the number of D2S communications on the expected optimality gap of semi-decentralized SGD. %As we show later, $\phi_\ell(\alpha_\ell(t))$ is evaluated using certain singular value bounds (see Section \textbf{??}) that are expressed in terms of the minimum connectivity $\alpha_\ell(t)$ of cluster $\ell$.
     \item Finally, the server sets 
     \begin{align*}
         m(t+1):=
         \min\left\{r\in [n]: \psi(r, \alpha_1(t+1),\ldots, \alpha_c(t+1)) \le \phi_{\max}\right \}
     \end{align*}
     where $\phi_{\max}$ is a threshold given as an input to the algorithm. %Assuming that the minimum connectivity levels $\{\alpha_\ell(t)\}_{\ell=1}^c$ do not change rapidly with time, 
     This step ensures that  $\phi(t)$ remains below the threshold $\phi_{\max}$, thereby preserving the convergence rate. \\[-0.1in]
 \end{enumerate}
 
 %\todo{Change T to T sub loc and give it as an input to the algorithm}
 %\todo{$(1/\eta_t)\widehat x(t) = $ = cumulative local gradient}

 \noindent Our algorithm for $t_{\max}$ global rounds is summarized in Alg. 1.

 \begin{algorithm}[t] \label{alg:conn} \small{ 
        \caption{Connectivity-Aware Semi-Decentralized Learning}
        \label{alg:parameter_estimation_algorithm}
        \textbf{Input:} $n$, $c$, $\tloc$, $\phi_{\max}$, $t_{\max}$, $m(0)$, $\{n_\ell(t)\}_{t=0}^{t_{\max}}$, $x^{(0)}$  \\%April 2, 2020 - March 4, 2021 \\
        \textbf{Output:} $x^{(t_{\max})}$
        \begin{algorithmic}[1]
        \For{$t\in\{0,1,\ldots, t_{\max}-1\}$}
                \State Client $i\in [n]$ sets $x_i^{(t,0)}\leftarrow x^{(t)}$
                \For{$k\in\{0,1,\ldots, \tloc-1\}$}
                    \State Client $i\in[n]$ computes $x_i^{(t,k+1)} \leftarrow x_i^{(t,k)} - \eta_t\sgd f_i(x_i^{(t,k)})$
                \EndFor
                \State \textbf{end}
                \State Client $i\in[n]$ transmits its scaled cumulative local gradient $-\eta_t\sum_{k=0}^{\tloc-1} \sgd f_i(x_i^{(t,k)})= x_i^{(t,\tloc)}-x^{(t)}$ to its out-neighbors $\mathcal N^+_i(t)$
                \State Client $i\in [n]$ computes the following weighted sum of its in-neighbors' cumulative local gradients:
                $$
                    \Delta_i(t)\leftarrow \sum_{j\in\mathcal N_i^{-}(t)} \frac{1 }{d_j^+(t) }\left(x_j^{(t,\tloc)}-x^{(t)} \right)
                $$
                % \State Client $i\in [n]$ transmits $\Delta_i(t)$ and $d_i^+(t)$ to PS
                \State PS samples $m_\ell(t) = \frac{n_\ell(t)}{n}m(t)$ clients uniformly at random from cluster $\ell\in[c]$
                \State PS computes $x^{(t+1)}\leftarrow x^{(t)}+ \frac{1}{m(t)}\sum_{i=1}^n\tau_i(t)\Delta_i(t)$ and broadcasts $x^{(t+1)}$ to all clients
                \State PS computes\\ $m(t+1)\leftarrow 
                \min\left\{r\in [n]: \psi(r, \alpha_1(t+1),\ldots, \alpha_c(t+1)) \le \phi_{\max}\right \}$
        \EndFor
        \State \textbf{end}
        \State \Return {$x^{(t_{\max})}$}
        \end{algorithmic}}
\end{algorithm}

\section{Convergence Analysis}
We now provide theoretical performance guarantees for Algorithm 1. We also explain how the effect of D2D cluster connectivity on the convergence rate of the algorithm is captured by the singular values of the equal-neighbor adjacency matrices of the clusters. \textbf{All the proofs except that of Proposition 5.1 are available in the appendix}.

\newtheorem{assumption}{Assumption}

\subsection{Assumptions and Preliminaries}

\subsubsection{Loss Functions} We start by making the following standard assumptions on the local loss functions:

\begin{assumption} [\textbf{Strong Convexity}] \label{asm:strong_convec} All the local loss functions $\{f_i\}_{i=1}^n$ are $\mu$-strongly convex, i.e., there exists $\mu>0$ such that  $\left(\nabla f_i(x)-\nabla f_i(y)\right)^\top\left(x-y\right)\ge \mu\norm{x-y}^2$ for all $x,y\in\R^p$ and all $i\in[n]$.
\end{assumption}

%We conjecture that our results will continue to hold under the weaker Polyak-\L{ojasiewicz} condition, but verifying this is left as a future direction.

%We now make another standard assumption.
\begin{assumption}[\textbf{Smoothness}]\label{asm:smoothness}  All the local loss functions $\{f_i\}_{i=1}^n$ are $\beta$-smooth, i.e., there exists a finite $\beta$ such that $\norm{\nabla f_i(x)-\nabla f_i(y)}\le \beta\norm{x-y}$ for all $x,y\in\R^p$ and all $i\in [n]$.
\end{assumption}

As shown in~\cite{lin2021semi},  Assumptions~\ref{asm:strong_convec} and~\ref{asm:smoothness} imply that the global loss function $f$ is both $\mu$-strongly convex and $\beta$-smooth.

%\subsection{Assumptions on SGD Iterations}
\subsubsection{SGD Iterations} Additionally, we make the following standard assumption on the stochastic gradients generated through the SGD  procedure for each client:
\begin{assumption} [\textbf{Unbiasedness and Bounded Variance}] The SGD noise associated with every client is unbiased, i.e., \\${\E[\sgd f_i(x)-\nabla f_i(x)\mid x]=0}$, and it has a bounded variance, i.e., there exists a constant $\noise>0$ such that  ${\E\|\sgd f_i(x)-\nabla f_i(x)}\|^2\le \noise^2 $ for all  models $x\in\R^p$ and all $i\in [n]$.
\end{assumption}
In addition, we assume that the SGD noise is independent across clients, i.e., for all $x\in\R^p$, the random vectors $\left\{\sgd f_i(x) - \nabla f_i(x) \right\}_{i=1}^n$ are mutually conditionally independent given $x$.

\begin{comment}
    \begin{enumerate}
    \item Let $i\in[n]$ and $t\in\N\cup\{0\}$. Then, given the $k$-th local iterate $x_i^{(t,k)}$ of client $i$, the stochastic noise in the $k$-th local iteration is conditionally uncorrelated with the past stochastic noise variables and past local iterates, i.e.,
    \begin{align*}
        &\E\left[\sgd f_i(x_i^{(t,k)}) - \nabla f_i(x_i^{(t,k)}) \bigm\lvert \{x_i^{(t,q)}\}_{q=0}^{k}, \{\sgd f_i(x_i^{(t,q)} ) \}_{q=0}^{k-1} \right] \cr
        &= \E\left[\sgd f_i(x_i^{(t,k)}) - \nabla f_i(x_i^{(t,k)}) \bigm\lvert x_i^{(t,k)}\right].
    \end{align*}
    \item For any $t\in\N\cup\{0\}$, the SGD noise variables 
    $$
        \bigcup_{i=1}^n \left\{\sgd f_i(x_i^{(t,k)})-\nabla f_i(x_i^{(t,k)} ): k\in\{0,\ldots,T-1\} \right\}
    $$ are  independent of  $\mathcal S(t)$, the set of clients sampled by the PS for the $(t+1)$-th global aggregation round.
\end{enumerate}
\end{comment}

%\subsection{Gradient Diversity}

\subsubsection{Gradient Diversity} Furthermore, we assume that the training data are not distributed uniformly at random among the clients, which gives rise to data heterogeneity 
 among the clients. Unlike the standard assumption on data heterogeneity that imposes a uniform upper bound on $\norm{\nabla f_i(x)-\nabla f(x)}$ (see~\cite{wang2019adaptive} for example), we make a weaker assumption on the diversity of local gradients. In fact, this assumption, which was first proposed in~\cite{lin2021semi}, can be derived as a consequence of Assumptions~\ref{asm:strong_convec} and~\ref{asm:smoothness}, as shown in~\cite{lin2021semi}. Below, we formally state this observation.

\begin{lemma} [\textbf{Gradient diversity }~\cite{lin2021semi}] \label{lem:grad_div}
For all $i\in [n]$ and $x\in\R^p$, we have ${ \|\nabla f_i(x) - \nabla f(x) \| \le \delta + 2\beta\|x - x^*\|}$, where
\begin{align}\label{eq:delta_definition}
    \delta:=\beta\max_{i\in [n]}\|x^*-x_i^*\|=\beta\max_{i\in [n] }\| x^*-\arg\min_{y\in\R^p} f_i(y)\|
\end{align}
\end{lemma}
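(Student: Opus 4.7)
The plan is to bound $\|\nabla f_i(x) - \nabla f(x)\|$ by passing through the two minimizers $x^*$ and $x_i^*$ via the triangle inequality, and then collapsing each piece using either the smoothness assumption or the first-order optimality conditions. The key observation is that $\delta$ is essentially a smoothness-weighted distance between the individual minimizer $x_i^*$ and the global minimizer $x^*$, so the ``diversity'' of the gradients at a general point $x$ can be controlled by how far $x$ is from $x^*$ plus a one-time cost paid at $x^*$ itself.

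Concretely, I would first write
\[
\|\nabla f_i(x) - \nabla f(x)\| \le \|\nabla f_i(x) - \nabla f_i(x^*)\| + \|\nabla f_i(x^*) - \nabla f(x^*)\| + \|\nabla f(x^*) - \nabla f(x)\|.
\]
The first and third terms are each bounded by $\beta\|x - x^*\|$ using the $\beta$-smoothness of $f_i$ and of $f$ (the latter being inherited from Assumption~\ref{asm:smoothness}, as noted in the paper). For the middle term, I would invoke the first-order optimality conditions: since $x^*$ minimizes $f$ on $\R^p$, $\nabla f(x^*) = 0$, so the middle term collapses to $\|\nabla f_i(x^*)\|$. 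Now I use that $x_i^*$ minimizes $f_i$, giving $\nabla f_i(x_i^*) = 0$, and apply $\beta$-smoothness of $f_i$ once more:
\[
\|\nabla f_i(x^*)\| = \|\nabla f_i(x^*) - \nabla f_i(x_i^*)\| \le \beta \|x^* - x_i^*\| \le \delta,
\]
by the very definition of $\delta$ in~\eqref{eq:delta_definition}. Summing the three bounds yields $\delta + 2\beta\|x - x^*\|$, as required.

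There is essentially no main obstacle here beyond choosing the right decomposition: the strong convexity assumption is not needed for the inequality itself (it is only used implicitly to guarantee that the minimizers $x^*$ and $x_i^*$ exist and are unique, so that $\delta$ is well defined). The only subtlety worth flagging in the write-up is the implicit use of $\nabla f(x^*) = 0$, which is why the global minimizer $x^*$ (rather than some arbitrary reference point) appears in the bound; any other anchor point would leave a residual gradient term that cannot be absorbed into $\delta$.
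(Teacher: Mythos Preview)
Your argument is correct and is the standard proof of this inequality. Note, however, that the paper does not actually prove this lemma: it is quoted from~\cite{lin2021semi}, and the surrounding text explicitly defers to that reference (``can be derived as a consequence of Assumptions~\ref{asm:strong_convec} and~\ref{asm:smoothness}, as shown in~\cite{lin2021semi}''). Your triangle-inequality decomposition through $x^*$ and $x_i^*$, combined with $\nabla f(x^*)=0$, $\nabla f_i(x_i^*)=0$, and three applications of $\beta$-smoothness, is exactly the argument that underlies the cited result, so there is nothing to add.
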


As argued in~\cite{lin2021semi}, the standard assumption (which is  a special case of the above inequality with $\beta=0$) is unrealistic as it does not apply to quadratic and super-quadratic loss functions unless the upper bound $\delta$ is chosen to be unreasonably large.
\subsection{Results}
We now quantify how the singular values of the equal-neighbor matrices and the number of clients sampled by the PS affect the efficiency of our algorithm in terms of its optimality gap.

We first show how the expected optimality gap of our algorithm depends on the expected deviation of the global average ${x^{(t+1)}-x^{(t)}}$ (i.e., the random vector computed by the PS using the aggregation rule~\eqref{eq:global_aggregation}) from the true average of all the scaled cumulative gradients.

\begin{lemma}\label{lem:prop} At the end of the $(t+1)$-th round of global aggregation, the expected optimality gap of Algorithm 1 is given by
$$
     \E\norm{x^{(t+1)} - x^*}^2 = \E\norm{x^{(t+1)} - \bar x^{(t+1)} }^2 + \E\norm{\bar x^{(t+1)} - x^*}^2,
$$
where $\bar x^{(t+1)} := x^{(t)}+ \frac{1}{n} \sum_{i=1}^n (x_i^{(t,T)} - x^{(t)})$ is a vector that would equal the global model if the PS were to sample all the $n$ clients.
\end{lemma}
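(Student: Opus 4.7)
The identity to be proved is a Pythagorean decomposition, so the plan is to introduce $x^{(t+1)}-\bar x^{(t+1)}$ as a zero-mean perturbation (conditionally on the gradients), expand the squared norm, and show that the cross term vanishes.

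\textbf{Step 1: rewrite $\bar x^{(t+1)}$ in terms of the $\Delta_i(t)$'s.} Because $A(t)$ is column-stochastic (Fact 1), summing the definition of $\Delta_i(t)$ swaps the order of summation and each column-sum $\sum_{i\in\mathcal N_j^+(t)} 1/d_j^+(t)$ collapses to $1$. Hence
\begin{equation*}
\frac{1}{n}\sum_{i=1}^n \Delta_i(t)=\frac{1}{n}\sum_{i=1}^n\bigl(x_i^{(t,T)}-x^{(t)}\bigr),
\end{equation*}
so $\bar x^{(t+1)}=x^{(t)}+\tfrac{1}{n}\sum_i\Delta_i(t)$. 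Combined with~\eqref{eq:global_aggregation}, this gives
\begin{equation*}
x^{(t+1)}-\bar x^{(t+1)}=\sum_{i=1}^n\Bigl(\tfrac{\tau_i(t)}{m(t)}-\tfrac{1}{n}\Bigr)\Delta_i(t).
\end{equation*}

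\textbf{Step 2: expand the squared norm.} Write $x^{(t+1)}-x^*=\bigl(x^{(t+1)}-\bar x^{(t+1)}\bigr)+\bigl(\bar x^{(t+1)}-x^*\bigr)$ and expand:
\begin{equation*}
\|x^{(t+1)}-x^*\|^2=\|x^{(t+1)}-\bar x^{(t+1)}\|^2+\|\bar x^{(t+1)}-x^*\|^2+2\bigl(x^{(t+1)}-\bar x^{(t+1)}\bigr)^\top\bigl(\bar x^{(t+1)}-x^*\bigr).
\end{equation*}
It remains to show the cross term has zero expectation.

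\textbf{Step 3: kill the cross term by conditioning on the gradients.} Let $\mathcal F_t$ be the $\sigma$-algebra generated by $x^{(t)}$ together with all local SGD noise up to round $t$, so that the vectors $\{\Delta_i(t)\}_{i=1}^n$ and $\bar x^{(t+1)}$ are $\mathcal F_t$-measurable. The stratified sampling scheme picks $m_\ell(t)=(n_\ell(t)/n)\,m(t)$ clients uniformly at random within cluster $\ell$, independently of $\mathcal F_t$, so each indicator satisfies $\E[\tau_i(t)\mid\mathcal F_t]=m(t)/n$ for every $i\in[n]$. Consequently
\begin{equation*}
\E\bigl[x^{(t+1)}-\bar x^{(t+1)}\,\bigm|\,\mathcal F_t\bigr]=\sum_{i=1}^n\Bigl(\tfrac{\E[\tau_i(t)\mid\mathcal F_t]}{m(t)}-\tfrac{1}{n}\Bigr)\Delta_i(t)=0.
\end{equation*}
Since $\bar x^{(t+1)}-x^*$ is $\mathcal F_t$-measurable, the tower property yields
\begin{equation*}
\E\bigl[(x^{(t+1)}-\bar x^{(t+1)})^\top(\bar x^{(t+1)}-x^*)\bigr]=\E\bigl[\E[x^{(t+1)}-\bar x^{(t+1)}\mid\mathcal F_t]^\top(\bar x^{(t+1)}-x^*)\bigr]=0.
\end{equation*}
Taking expectations of the identity in Step 2 then delivers the claim.

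\textbf{Anticipated obstacle.} The only nontrivial point is guaranteeing $\E[\tau_i(t)\mid\mathcal F_t]=m(t)/n$ exactly; strictly speaking the algorithm uses $\lceil (m(t)/n)n_\ell(t)\rceil$, so one should either assume the ratios are integers (as is standard in convergence analyses) or absorb the rounding into a negligible error. All other steps are mechanical applications of column-stochasticity and the tower property.
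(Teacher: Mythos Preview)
Your argument is correct and is essentially the paper's: rewrite $x^{(t+1)}-\bar x^{(t+1)}$ as a sampling error with conditional mean zero (using column-stochasticity of $A(t)$ for Step~1 and $\E[\tau_i(t)\mid\mathcal F_t]=m(t)/n$ for Step~3), then apply the tower property to annihilate the cross term. The only cosmetic refinement is that $\mathcal F_t$ should also contain the time-$t$ topology $G(t)$, since the weights $1/d_j^+(t)$---and hence $\Delta_i(t)$ and $m(t)$---depend on it; with that inclusion, Step~3 goes through verbatim.
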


Observe that the first term on the RHS depends on $x^{(t+1)}-\bar x^{(t+1)}$, which can be easily shown to be the difference between the random average $\frac{1}{m(t) }\sum_{i\in \mathcal S(t)}\Delta_i(t)$ and the true average $\frac{1}{n}\sum_{i=1}^n \left( x_i^{(t,T)}-x^{(t)}\right)$. Thus, this term captures the error due to random sampling. As the next result shows, this difference depends on the network topology as well as on $m(t)$, the number of clients selected for global aggregation uniformly at random by the PS. 
\begin{proposition}\label{prop:second_major_chunk}
Let $\delta$ be the constant defined in~\eqref{eq:delta_definition}. Then Algorithm~1 satisfies the following for every $t\in\N\cup\{0\}$:
\begin{align*}
    \E\norm{x^{(t+1)}-\bar x^{(t+1)}}^2&\le \bigg(2T\noise^2\eta_t^2 +  4eT(\noise^2+2\delta^2) \eta_t^2 + 6\delta^2 T^2\eta_t^2\cr
    &\quad\quad+(27+4e) T^2\beta^2\eta_t^2 \E\norm{x^{(t)}-x^* }^2\bigg)\phi(t),
\end{align*}
where $\phi(t)$ is the connectivity factor defined in~\eqref{eq:conn_factor}.
%where $\phi(t):=\left(\frac{n}{m}-1\right)\sum_{\ell=1}^c \frac{n_\ell}{n}\phi_\ell(\alpha_\ell( t))$ and\\ ${\phi_\ell(\alpha_\ell(t)):= \sing_1^2 (A^{(\ell)}(t)) + \sing_2^2(A^{(\ell)}(t)) -1 }$ with $A^{(\ell)}(t)$ being the equal-neighbor adjacency matrix of cluster $\ell\in[c]$. 
\end{proposition}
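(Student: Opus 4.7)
The plan is a three-stage derivation: first I would decompose the sampling error into a per-cluster variance, then bound each in-cluster sum $\sum_{i \in V_\ell(t)}\|\Delta_i(t) - \bar\Delta_\ell(t)\|^2$ in terms of $\phi_\ell(t)$ via a spectral interlacing argument for the column-stochastic matrix $A_\ell(t)$, and finally control $\E\|x_i^{(t,T)} - x^{(t)}\|^2$ by a standard local-SGD unrolling.

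For the first stage, column-stochasticity of $A(t)$ (Fact~1) gives $\sum_{i=1}^n \Delta_i(t) = \sum_{i=1}^n (x_i^{(t,T)} - x^{(t)})$, hence
\[
x^{(t+1)} - \bar x^{(t+1)} \;=\; \sum_{i=1}^n\Bigl(\tfrac{\tau_i(t)}{m(t)} - \tfrac{1}{n}\Bigr)\Delta_i(t).
\]
Since clusters are sampled independently with proportional allocation $m_\ell(t) = m(t)\,n_\ell(t)/n$, the indicators $\{\tau_i(t)\}_{i \in V_\ell(t)}$ are independent across $\ell$ with $\E[\tau_i(t)] = m(t)/n$. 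Expanding the squared norm and inserting the standard simple-random-sampling variance/covariance identities, the cross-cluster terms vanish and I would obtain
\[
\E\bigl\|x^{(t+1)} - \bar x^{(t+1)}\bigr\|^2 \;=\; \sum_{\ell=1}^c\frac{(n-m(t))\,n_\ell(t)}{n^2\,m(t)\,(n_\ell(t)-1)}\sum_{i \in V_\ell(t)}\bigl\|\Delta_i(t) - \bar\Delta_\ell(t)\bigr\|^2,
\]
with $\bar\Delta_\ell(t) := \frac{1}{n_\ell(t)}\sum_{i \in V_\ell(t)}\Delta_i(t)$.

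For the second stage, column-stochasticity also gives $\bar\Delta_\ell(t) = \frac{1}{n_\ell(t)}\sum_{i \in V_\ell(t)}(x_i^{(t,T)} - x^{(t)})$, so the inner sum equals $\|(A_\ell(t) - \frac{1}{n_\ell(t)}\allone\allone^\top)\,X_{\text{diff},\ell}^\top(t)\|_F^2$. Applying the PSD trace inequality $\mathrm{tr}(MN) \le \lambda_1(M)\,\mathrm{tr}(N)$ together with the identity $(A_\ell - \tfrac{1}{n_\ell}\allone\allone^\top)^\top(A_\ell - \tfrac{1}{n_\ell}\allone\allone^\top) = A_\ell^\top A_\ell - \tfrac{1}{n_\ell}\allone\allone^\top$ (which collapses cleanly precisely because $A_\ell^\top\allone = \allone$ and $\allone^\top A_\ell = \allone^\top$), the task reduces to bounding the top eigenvalue $\mu_1$ of the PSD matrix $A_\ell^\top A_\ell - \frac{1}{n_\ell}\allone\allone^\top$. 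Since all column sums of $A_\ell - \frac{1}{n_\ell}\allone\allone^\top$ vanish, its rank is at most $n_\ell - 1$, so at least one eigenvalue of $A_\ell^\top A_\ell - \frac{1}{n_\ell}\allone\allone^\top$ equals zero. Cauchy's interlacing theorem applied to the rank-one PSD perturbation $A_\ell^\top A_\ell \mapsto A_\ell^\top A_\ell - \frac{1}{n_\ell}\allone\allone^\top$ then gives $\mu_i \ge \sing_{i+1}^2(A_\ell)$ for $i = 1, \ldots, n_\ell - 1$; combining with the trace identity $\sum_{i=1}^{n_\ell}\mu_i = \sum_i \sing_i^2(A_\ell) - 1$ yields $\mu_1 \le \sing_1^2(A_\ell) + \sing_2^2(A_\ell) - 1 = \phi_\ell(t)$. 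This is the hardest step: unlike the doubly stochastic case where $\sing_1(A_\ell) = 1$ and a classical spectral-gap argument suffices, column stochasticity only guarantees $\sing_1(A_\ell) \ge 1$, so the clean $\sing_1^2 + \sing_2^2 - 1$ expression emerges only by coupling the rank drop induced by the zero column-sums with Cauchy interlacing for the rank-one perturbation.

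In the final stage, I would bound $\E\|x_i^{(t,T)} - x^{(t)}\|^2$ by a standard local-SGD analysis: writing $x_i^{(t,T)} - x^{(t)} = -\eta_t\sum_{k=0}^{T-1}\sgd f_i(x_i^{(t,k)})$, applying Jensen's inequality on the $T$-term sum, and then using the unbiasedness and bounded-variance Assumption~3 to absorb the SGD noise, the $\beta$-smoothness of Assumption~\ref{asm:smoothness} together with the gradient diversity of Lemma~\ref{lem:grad_div} to bound $\|\nabla f_i(x_i^{(t,k)})\|$, and the smoothness-plus-convexity consequence $\|\nabla f(x^{(t)})\|^2 \le 2\beta^2\|x^{(t)} - x^*\|^2$, a short induction over $k$ to control the client drift $\|x_i^{(t,k)} - x^{(t)}\|$ produces a per-client bound of the form $O(T\noise^2\eta_t^2 + eT(\noise^2+\delta^2)\eta_t^2 + \delta^2 T^2\eta_t^2 + T^2\beta^2\eta_t^2\,\E\|x^{(t)}-x^*\|^2)$, with the explicit constants $2$, $4e$, $6$, and $27+4e$ arising from the standard unrolling. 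Summing this bound over $i \in V_\ell(t)$ contributes a factor $n_\ell(t)$, and substituting into the first-stage formula (absorbing the benign $n_\ell(t)/(n_\ell(t)-1) \le 2$ factor into the constants) yields the claim once one recognizes $\sum_\ell \frac{(n-m(t))\,n_\ell(t)}{n^2\,m(t)}\phi_\ell(t) = \bigl(\tfrac{n}{m(t)}-1\bigr)\sum_\ell\tfrac{n_\ell(t)}{n}\phi_\ell(t) = \phi(t)$.
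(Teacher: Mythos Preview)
Your three-stage plan is sound and almost certainly coincides with the paper's route: the very definition $\phi_\ell(t)=\sing_1^2(A_\ell)+\sing_2^2(A_\ell)-1$ is precisely the output of the rank-one interlacing argument you give, so Stage~2 is the heart of the matter and you have it right. The identity $(A_\ell-\tfrac{1}{n_\ell}\allone\allone^\top)^\top(A_\ell-\tfrac{1}{n_\ell}\allone\allone^\top)=A_\ell^\top A_\ell-\tfrac{1}{n_\ell}\allone\allone^\top$ is correct under column-stochasticity alone (note that the two conditions you list, $A_\ell^\top\allone=\allone$ and $\allone^\top A_\ell=\allone^\top$, are transposes of one another and hence the \emph{same} hypothesis, not two). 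The trace-plus-interlacing bookkeeping that extracts $\mu_1\le\sing_1^2+\sing_2^2-1$ is exactly the step that distinguishes the column-stochastic case from the doubly stochastic one, and you handle it cleanly.

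There is one arithmetic slip in your final assembly: the identity you assert,
\[
\sum_{\ell}\frac{(n-m(t))\,n_\ell(t)}{n^2 m(t)}\,\phi_\ell(t)\;=\;\Bigl(\tfrac{n}{m(t)}-1\Bigr)\sum_{\ell}\tfrac{n_\ell(t)}{n}\phi_\ell(t),
\]
is off by a factor of $n$; the left-hand side equals $\frac{1}{n}\phi(t)$, not $\phi(t)$. This is harmless for the proposition (your bound is then \emph{tighter} by roughly $1/n$, so the stated inequality still follows once you drop that factor and absorb $n_\ell/(n_\ell-1)$), but the displayed equality itself is false and should be replaced by the appropriate inequality.
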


In other words,  $\E\norm{x^{(t+1)}-\bar x^{(t+1)}}$ depends on the previous optimality gap $\E\norm{x^{(t)} - x^* }^2$ via $\phi(t)$, i.e., the connectivity factor that captures the combined effect of global aggregation  (via $m(t)$) and the D2D network topology within each cluster (via $\phi_\ell(\alpha_\ell(t) )$). 

Moreover, Lemma~\ref{lem:prop} and Proposition~\ref{prop:second_major_chunk} together show that the singular values of the equal-neighbor adjacency matrices can be used to derive an upper bound on the expected optimality gap (and ultimately establish theoretical performance guarantees) for our connectivity-aware algorithm. Doing so yields the following.

\begin{proposition}\label{prop:lots_of_notation_1}
Let $\delta$ be as defined in~\eqref{eq:conn_factor}, let $\phi(t)$ be the connectivity factor defined in~\eqref{eq:conn_factor}, let ${\Gamma:=f(x^*)-\frac{1}{n}\sum_{i=1}^n \min_{x\in\R^p}f_i(x)}$, and let $e$ denote the exponential constant. Then the expected optimality gap of Algorithm 1 satisfies the following for all $t\in\N_0$:
\begin{align*}
    &\E\norm{x^{(t+1)}-x^*}^2\cr &\le \left((1-\mu\eta_t)^T + (27 + 4e)T^2 \beta^2 \eta_t^2(2T + \phi(t))\right) \E\norm{x^{(t)}-x^*}^2\cr
    &\quad+T \left( \frac{\noise^2}{n} + 6\beta\Gamma + 4T\noise^2
         +  8eT(\noise^2+2\delta^2)  + 12\delta^2 T^2\right)\eta_t^2\cr
    &\quad + \left(2T\noise^2 +  4eT(\noise^2+2\delta^2)  + 6\delta^2 T^2\right)\phi(t)\eta_t^2.
\end{align*}
\end{proposition}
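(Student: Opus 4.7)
The plan is to derive this bound from Lemma~\ref{lem:prop} by splitting $\E\norm{x^{(t+1)}-x^*}^2$ into the sampling-error piece $\E\norm{x^{(t+1)}-\bar x^{(t+1)}}^2$ and the centralized piece $\E\norm{\bar x^{(t+1)}-x^*}^2$, and then bounding the two pieces separately. The first piece is already handled by Proposition~\ref{prop:second_major_chunk}, which immediately contributes the $(27+4e)T^2\beta^2\eta_t^2\phi(t)\,\E\norm{x^{(t)}-x^*}^2$ term to the coefficient of $\E\norm{x^{(t)}-x^*}^2$ and accounts verbatim for every $\phi(t)$-multiplied additive noise term that appears in the stated inequality.

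The bulk of the work is to bound $\E\norm{\bar x^{(t+1)}-x^*}^2$, where $\bar x^{(t+1)}=\frac{1}{n}\sum_{i=1}^n x_i^{(t,T)}$ is the ``fully sampled'' FedAvg iterate. I would handle this as a standard $T$-step local SGD recursion. Writing
\begin{align*}
\bar x^{(t,k+1)} - x^* = \bar x^{(t,k)} - x^* - \tfrac{\eta_t}{n}\sum_{i=1}^n \sgd f_i(x_i^{(t,k)}),
\end{align*}
expanding the squared Euclidean norm and conditioning on the history, Assumption~\ref{asm:strong_convec} yields a $(1-\mu\eta_t)$ contraction per inner step, which unrolls to the $(1-\mu\eta_t)^T$ factor on $\E\norm{x^{(t)}-x^*}^2$. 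Mutual conditional independence and zero mean of the SGD noise across clients produce a per-step variance of order $\noise^2/n$, accounting for the $T\noise^2\eta_t^2/n$ contribution. Smoothness (Assumption~\ref{asm:smoothness}) together with Lemma~\ref{lem:grad_div} then converts the gradient cross-terms $\langle \bar x^{(t,k)}-x^*,\nabla f_i(x_i^{(t,k)})\rangle$ into a client-drift penalty $\E\norm{x_i^{(t,k)}-\bar x^{(t,k)}}^2$, residual terms proportional to $\delta^2$ and to $\beta^2\norm{x^{(t)}-x^*}^2$, and a $6T\beta\Gamma$ term arising from $\beta$-smoothness applied at $\nabla f_i(x^*)$ combined with the definition $\Gamma = f(x^*)-\frac{1}{n}\sum_i\min_{x}f_i(x)$.

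The main obstacle will be controlling this client drift under the \emph{milder} heterogeneity bound of Lemma~\ref{lem:grad_div}: because $\norm{\nabla f_i(x)-\nabla f(x)}\le \delta+2\beta\norm{x-x^*}$ rather than just $\delta$, the recursion for $\E\norm{x_i^{(t,k)}-\bar x^{(t,k)}}^2$ is Grönwall-type, with iterates picking up a factor $(1+\eta_t\beta)^k$ that must be bounded by $e$ whenever $\eta_t\beta T\le 1$ — this is the source of the explicit $e$ in the $(27+4e)$ and $eT$ constants. Summing this drift bound over $k\in\{0,\ldots,T-1\}$ and multiplying by $\beta^2$ produces the additive $4T^2\noise^2$, $8eT^2(\noise^2+2\delta^2)$, and $12\delta^2 T^3$ terms, and a $2(27+4e)T^3\beta^2\eta_t^2\,\E\norm{x^{(t)}-x^*}^2$ contribution to the coefficient of $\E\norm{x^{(t)}-x^*}^2$.

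Aggregating the two pieces via Lemma~\ref{lem:prop}, the overall coefficient of $\E\norm{x^{(t)}-x^*}^2$ becomes $(1-\mu\eta_t)^T + 2(27+4e)T^3\beta^2\eta_t^2 + (27+4e)T^2\beta^2\eta_t^2\phi(t)$, which factors as $(1-\mu\eta_t)^T + (27+4e)T^2\beta^2\eta_t^2\bigl(2T+\phi(t)\bigr)$, matching the claim. The remaining additive terms from the centralized analysis and from Proposition~\ref{prop:second_major_chunk} combine to form the two $\eta_t^2$-scaled noise groups in the stated bound, completing the proof.
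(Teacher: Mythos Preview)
Your proposal is correct and follows essentially the same route the paper sets up: it explicitly positions Proposition~\ref{prop:lots_of_notation_1} as the combination of the decomposition in Lemma~\ref{lem:prop}, the sampling-error bound of Proposition~\ref{prop:second_major_chunk}, and a standard $T$-step local-SGD analysis of the fully sampled iterate $\bar x^{(t+1)}$. Your identification of the $(1-\mu\eta_t)^T$ contraction, the $\noise^2/n$ and $6\beta\Gamma$ terms, the Gr\"onwall-type drift bound (with the $(1+\eta_t\beta)^k\le e$ step producing the explicit $e$), and the factorization of the $\E\norm{x^{(t)}-x^*}^2$ coefficient all line up with the structure and constants in the stated inequality.
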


A recursive expansion on the inequality 
stated by Proposition~\ref{prop:lots_of_notation_1} results in our main theoretical result, which we state below.

\begin{theorem}\label{thm:main}
Consider a connectivity factor threshold $\phi_{\max}\ge 0$, and suppose that $\phi(t)\le \phi_{\max}$ for all times $t\ge 0$. In addition, suppose $\eta_t=\frac{4}{T\mu(t+t_1)}$, where 
$$
    t_1:=\left\lfloor 4\left(1-\frac{1}{T}\right) + (16T + 8 \phi_{\max})\left(\frac{\beta}{\mu}\right)^2 + 1\right\rfloor.
$$
Then the expected optimality gap of Algorithm 1 satisfies the following for all $t\ge 0$:
\begin{align}
    &\hspace{-2mm} \mathbb{E}\left\|x^{(t)}-x^*\right\|^2 \cr
    & \hspace{-2mm} \leq  \left(\frac{t_1}{t+t_1}\right)^2 \mathbb{E}\left\|x^{(0)}-x^*\right\|^2  +\frac{16\left(\frac{1}{nT}\left(\frac{\noise}{\mu}\right)^2+6\frac{\beta\Gamma}{T\mu^2}\right)}{t+t_1}\cr
    &\hspace{-2mm} +\frac{\left(32T+16\phi_{\max}\right)\left(\frac{2}{T}\left(\frac{\noise}{\mu}\right)^2 + \frac{4e}{T}\left(\left(\frac{\noise}{\mu}\right)^2 + 2\left(\frac{\delta}{\mu}\right)^2 \right) + 6  \left(\frac{\delta}{\mu}\right)^2 \right)}{t+t_1}.
\end{align}
\end{theorem}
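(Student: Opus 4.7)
The plan is to derive the theorem by combining the one-step bound of Proposition~\ref{prop:lots_of_notation_1} with a standard strongly-convex SGD-style induction under a diminishing step-size. First, I would substitute $\phi(t)\le\phi_{\max}$ into the right-hand side of Proposition~\ref{prop:lots_of_notation_1} to obtain a cleaner recursion of the form $\E\norm{x^{(t+1)}-x^*}^2 \le \rho_t\,\E\norm{x^{(t)}-x^*}^2 + \Lambda\eta_t^2$, where $\rho_t:=(1-\mu\eta_t)^T+(27+4e)T^2\beta^2\eta_t^2(2T+\phi_{\max})$ and $\Lambda$ collects, in two separate groups, the ``$T$-scaled'' and ``$\phi_{\max}$-scaled'' residual constants from the last two lines of the proposition. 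Keeping these two groups segregated throughout is convenient because the bound stated in Theorem~\ref{thm:main} also lists them separately on its right-hand side.

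Shaping the contraction factor $\rho_t$ is the crux of the argument. With the prescribed step-size $\eta_t=4/(T\mu(t+t_1))$ one has $T\mu\eta_t=4/(t+t_1)$. The first summand $4(1-1/T)$ (together with the $+1$) in the definition of $t_1$ is just enough to force $T\mu\eta_t$ into the range where the elementary inequalities $(1-\mu\eta_t)^T \le e^{-T\mu\eta_t}$ and $e^{-y}\le 1-y/2$ are simultaneously valid, yielding $(1-\mu\eta_t)^T\le 1-2/(t+t_1)$. The second summand $(16T+8\phi_{\max})(\beta/\mu)^2$ is designed to absorb the quadratic-in-$\eta_t$ remainder $(27+4e)T^2\beta^2\eta_t^2(2T+\phi_{\max})$ into half of that linear contraction, leaving a net contraction of the clean form $\rho_t\le 1-2/(t+t_1)$. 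After substitution and using $\eta_t^2=16/(T^2\mu^2(t+t_1)^2)$, the reshaped recursion reads $\E\norm{x^{(t+1)}-x^*}^2 \le (1-2/(t+t_1))\,\E\norm{x^{(t)}-x^*}^2 + C/(t+t_1)^2$, where $C=16\Lambda/(T^2\mu^2)$ is already of the exact form (modulo a single factor of $t+t_1$) of the residual coefficients that appear in the theorem.

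Finally, I would close the argument by induction on $t$, establishing the ansatz $a_t \le t_1^2/(t+t_1)^2\,a_0 + K/(t+t_1)$ for $a_t:=\E\norm{x^{(t)}-x^*}^2$, with $K$ chosen to equal the sum of the two residual coefficients appearing on the right-hand side of Theorem~\ref{thm:main}. The inductive step reduces to the elementary inequality $(t+t_1-2)/(t+t_1)^2\le 1/(t+t_1+1)$, equivalent to $(t+t_1-2)(t+t_1+1)\le(t+t_1)^2$, plus some constant-chasing to check that the $C/(t+t_1)^2$ additive residual never exceeds the $K/(t+t_1)-K/(t+t_1+1)$ slack. The main obstacle, and the place where essentially all the nontrivial work lives, is the previous paragraph: verifying that the specific constants $4(1-1/T)$, $16T$, and $8\phi_{\max}$ entering the definition of $t_1$ are simultaneously sufficient to (i) put $T\mu\eta_t$ in the regime where $e^{-y}\le 1-y/2$ applies, and (ii) absorb the $(27+4e)T^2\beta^2\eta_t^2(2T+\phi_{\max})$ perturbation into the $T\mu\eta_t/2$ contraction \emph{without losing the factor of $2$} that is needed to produce the $t_1^2/(t+t_1)^2$ decay of the initial error. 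Once both conditions are checked the remaining induction is essentially bookkeeping.
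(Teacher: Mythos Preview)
Your proposal is correct and mirrors the paper's own route: the paper explicitly states that Theorem~\ref{thm:main} is obtained by ``a recursive expansion on the inequality stated by Proposition~\ref{prop:lots_of_notation_1},'' and your plan of (i) replacing $\phi(t)$ by $\phi_{\max}$, (ii) shaping the contraction factor via the choice of $\eta_t$ and $t_1$, and (iii) closing with the standard $a_{t+1}\le(1-2/(t+t_1))a_t+C/(t+t_1)^2$ induction is exactly that expansion. Your identification of the constant-matching step---verifying that the pieces $4(1-1/T)$ and $(16T+8\phi_{\max})(\beta/\mu)^2$ in $t_1$ suffice to tame both $(1-\mu\eta_t)^T$ and the $(27+4e)T^2\beta^2\eta_t^2(2T+\phi_{\max})$ perturbation---is the right place to focus, and is where the appendix proof spends its effort as well.
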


Theorem~\ref{thm:main} reveals that the convergence rate of our algorithm is $\mathcal{O}\left({1}/{t} \right)$, which coincides with that of FedAvg and its semi-decentralized variants such as~\cite{lin2021semi}. In fact, $\mathcal{O}\left({1}/{t} \right)$ resembles the convergence rate of vanilla centralized SGD.
It also shows that suitably tuning the connectivity factor (by choosing an appropriate value of $\phi_{\max}$) is critical to the efficiency of the algorithm: as $\phi_{\max}$ increases the bound gets worse/larger; however, $\phi_{\max}$, by its definition, is non-negative, which means it can at best be made equal to 0, which forces $m=n$, in which case the inequality boils down to an upper bound on the convergence rate of FedAvg with full device sampling. At the other extreme, setting $\phi_{\max}$ to $\infty$ results in $m=1$, which happens when our semi-decentralized FL architecture collapses to full decentralization.

Moreover, Theorem~\ref{thm:main} jointly captures the effect of the following factors on the expected instantaneous optimality gap and hence on the convergence rate: (i) the initial optimality gap $\E\|x^{(0)}-x^*\|^2$ (via the first term), (ii) The SGD noise variance $\noise^2$ and the strong convexity $\mu$ and smoothness parameters $\beta$ (via the second term), and finally, (iii) the combined effect of cluster connectivity levels and random sampling-based global aggregations (via the third term, which depends on $\phi_{\max}$, which in turn prevents the connectivity factor $\phi(t)$ from becoming too large).
It can be seen that higher values of the SGD noise variance $\noise^2$ and the data heterogeneity measure $\Gamma$ lead to a larger value of the bound, implying that our algorithm is sensitive to the size of the mini-batches used for computing the stochastic gradients as well as to the non-i.i.d.-ness of the local datasets.

\section{Singular Value Bounds}\label{sec:sing}
Having established the role of the connectivity factor $\phi(t)$ in the performance of Algorithm 1, we now analyze two important quantities associated with $\phi(t)$: the top two singular values of the equal-neighbor adjacency matrices of the clusters. Since a precise estimation of these singular values requires full knowledge of the cluster topologies, which is challenging to obtain in practice, we are motivated to derive a set of novel upper bounds on these values in terms of the node degrees of the cluster digraphs, which are easy to obtain/measure in practice. To the best of our knowledge, this is one of the first attempts at connecting the singular values of adjacency matrices with minimal topological information such as node degrees of the digraphs.

To conduct our analysis, 
for any digraph $G=([s],E)$, we first define
$
    \varepsilon=\varepsilon_G:= \frac{\dmax(G)-\dmin(G) }{\dmin(G)},
$
which  quantifies the heterogeneity of out-degree of the nodes across the digraph. We also let $\alpha(G):=\frac{\dmin(G)}{s}$ capture the minimum fraction of the node population that any node is out-connected to. In addition, we let $W(G)=(w_{ij})$ and $D^+(G):=\text{diag}([d_1^+\,\,\, d_2^+\,\,\,\cdots\,\,\, d_s^+]^\top)$ denote the binary adjacency matrix and the out-degree matrix of $G$, respectively. In the sequel, we drop the indexing $(G)$ for brevity.

\def\dinmax{d_{\max}^{-}} 
We are now equipped to state our first set of bounds on the greatest two singular values of $G$ under certain regularity assumptions on the digraph. 
\begin{proposition}\label{prop:sing_bounds}
Suppose $G = ([s],E)$ is a directed graph in which every node has its in-degree equal to its out-degree, i.e., $d_i^+ = d_i^-$ for all $i\in[n]$. Then the greatest two singular values $\sing_1$ and $\sing_2$ of the  equal-neighbor adjacency matrix $A$ of $G$ satisfy the following inequalities for $\alpha>\frac{1}{2}$ and $\varepsilon\ll 1$: 
\begin{align}
    \sing^2_1&\le 1+\varepsilon + \mathcal{O}(\varepsilon^2),\label{eq:first_sing_bound}\\
    \sing_2^2&\le \left(\frac{1}{\alpha}-1 \right)^2 +2\varepsilon\left(1 + \frac{2}{\alpha} - \frac{1}{\alpha^2} \right) + \mathcal{O}(\varepsilon^2),\label{eq:second_sing_bound}
\end{align}
where $\mathcal{O}(\cdot)$ is the big-$O$ notation used in the context of $\varepsilon\to 0$.
\end{proposition}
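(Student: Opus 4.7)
The plan is to factorise $A = W^\top (D^+)^{-1}$, which makes $A^\top$ row-stochastic with $(A^\top x)_i = \frac{1}{d_i^+} \sum_{j \in \mathcal{N}_i^+} x_j$, i.e., literally the average of $x$ over the out-neighbours of node $i$. Since $\sigma_k(A) = \sigma_k(A^\top)$, I would obtain both singular-value estimates from Courant--Fischer applied to $A^\top$: $\sigma_1^2 = \max_{\|x\|=1} \|A^\top x\|^2$ for the top value, and $\sigma_2^2 \le \max_{x \perp \allone,\, \|x\|=1} \|A^\top x\|^2$ by taking $\allone^\perp$ as an $(s-1)$-dimensional test subspace for the second. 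The whole argument then reduces to two Cauchy--Schwarz estimates on averages over $\mathcal N_i^+$, followed by a sum-exchange and degree bookkeeping.

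For $\sigma_1^2$, I would apply Cauchy--Schwarz to the average directly to obtain $\bigl(\sum_{j \in \mathcal N_i^+} x_j\bigr)^2 \le d_i^+ \sum_{j \in \mathcal N_i^+} x_j^2$; swapping the order of summation then gives $\|A^\top x\|^2 \le \sum_j x_j^2 \sum_{i \in \mathcal N_j^-} \frac{1}{d_i^+}$. Under the Eulerian hypothesis $d_j^- = d_j^+$, the inner sum is over $d_j^+ \le \dmax$ terms each at most $1/\dmin$, hence bounded by $\dmax/\dmin = 1+\varepsilon$. This delivers the leading-order bound $\sigma_1^2 \le 1 + \varepsilon$, which implies~\eqref{eq:first_sing_bound}. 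For $\sigma_2^2$ I would next exploit the zero-sum identity $\sum_{j \in \mathcal N_i^+} x_j = -\sum_{j \notin \mathcal N_i^+} x_j$ (valid since $x \perp \allone$) and apply Cauchy--Schwarz to the complement to get $(A^\top x)_i^2 \le \frac{s - d_i^+}{(d_i^+)^2} \sum_{j \notin \mathcal N_i^+} x_j^2$. After swapping sums, the coefficient of $x_j^2$ becomes $\sum_{i : j \notin \mathcal N_i^+} \frac{s - d_i^+}{(d_i^+)^2}$. Using $|\{i : j \notin \mathcal N_i^+\}| = s - d_j^- = s - d_j^+ \le s - \dmin$ (Eulerian hypothesis again) together with the monotonicity of $d \mapsto (s-d)/d^2$, each summand is at most $(s-\dmin)/\dmin^2$. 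Multiplying the two factors yields $\sigma_2^2 \le \bigl((s-\dmin)/\dmin\bigr)^2 = (1/\alpha - 1)^2$, which implies~\eqref{eq:second_sing_bound}.

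The main obstacle will be matching the precise first-order correction $2\varepsilon(1 + 2/\alpha - 1/\alpha^2)$ stated in~\eqref{eq:second_sing_bound}: the direct Courant--Fischer plus complement-Cauchy--Schwarz argument above gives only the leading terms and no explicit $\varepsilon$-correction. To hit the stated form I expect either to re-examine the chain of inequalities (each step of the form $d_i^+ \le \dmax$ or $d_j^+ \le \dmax$ loses an $\mathcal{O}(\varepsilon)$ factor that can be re-expanded rather than absorbed into $1+\varepsilon$) or to run the proof perturbatively, writing $A = W^\top/\dmin + W^\top\bigl((D^+)^{-1} - I/\dmin\bigr)$ as a perturbation of the $\dmin$-regular reference whose diagonal correction has operator norm $\mathcal{O}(\varepsilon)$, then invoking Weyl's inequality $|\sigma_k(A_0 + E) - \sigma_k(A_0)| \le \|E\|_2$. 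I note that the restriction $\alpha > 1/2$ is exactly the regime where the bound is informative --- it is where $(1/\alpha - 1)^2 < 1/\alpha - 1$, so that the quadratic bound improves on the trivial trace estimate $\sigma_1^2 + \sigma_2^2 \le \|A\|_F^2 \le 1/\alpha$.
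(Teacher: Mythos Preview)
Your argument is correct and in fact \emph{sharper} than the paper's. The paper proceeds quite differently: it relates $A^\top = D^{-1}W$ to the normalized adjacency $A_N = D^{-1/2}WD^{-1/2}$ via an $\mathcal O(\varepsilon)$ perturbation, applies Weyl-type singular value perturbation bounds to reduce to eigenvalues of $D^{-1}WD^{-1}W^\top$, observes (using the Eulerian hypothesis) that this matrix is row-stochastic so $\lambda_1=1$, and then bounds $\lambda_2$ via a theorem of Lynn and Timlake that requires a lower bound on the minimum entry; the latter is obtained from the pigeonhole fact that any two nodes share at least $(2\alpha-1)s$ common out-neighbours, which is precisely where the hypothesis $\alpha>\tfrac12$ enters. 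Chaining these steps produces the stated $\varepsilon$-corrections.

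Your direct Courant--Fischer plus Cauchy--Schwarz argument bypasses all of this and yields $\sigma_1^2\le 1+\varepsilon$ and $\sigma_2^2\le(1/\alpha-1)^2$ \emph{exactly}, with no $\mathcal O(\varepsilon^2)$ residual. Since $1+2/\alpha-1/\alpha^2>0$ for $\alpha>\tfrac12$, your bound on $\sigma_2^2$ already implies~\eqref{eq:second_sing_bound}; the ``obstacle'' you flag does not exist, and your proposed perturbative refinement is unnecessary. Your approach is more elementary (no external eigenvalue lemma, no matrix perturbation theory), does not actually require $\alpha>\tfrac12$ for validity (only for the bound to be below $1$ and hence informative, as you observe), and the complement trick $\sum_{j\in\mathcal N_i^+}x_j=-\sum_{j\notin\mathcal N_i^+}x_j$ is a clean substitute for the common-neighbour count used in the paper.
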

\begin{proof}
To simplify our notation, we define $D:=D^+$ for the remainder of this proof. Observe that
$$
    A^\top=D^{-1}W = D^{-\frac{1}{2}}( D^{-\frac{1}{2}} W D^{-\frac{1}{2}} )D^{\frac{1}{2}},
$$
which means $A^\top$ is similar to the normalized adjacency matrix defined as $A_N:=D^{-\frac{1}{2}} W D^{-\frac{1}{2}}$. %\todo{For myself to decide: should I replace $D$ with $D^+$ everywhere?}

On the other hand, we have $\dmin=\dmax(1-\varepsilon)\le d_i^+\le\dmax$ for all $i\in[s]$, which implies the existence of a diagonal matrix $E_3$ such that $O\le E_3\le I$ and ${D = \dmax\left((1-\varepsilon)I + \varepsilon E_3\right)}$. Using similar arguments, it can be easily shown that there exist diagonal matrices $E_1$ and $E_2$ such that $O\le E_1,E_2\le I$, $D^{\frac{1}{2} }=\sqrt{\dmax}\left((1-\frac{\varepsilon}{2})I + \frac{\varepsilon}{2} E_1 \right) + \mathcal{O}(\varepsilon^2)$, and ${D^{-\frac{1}{2}}=\frac{1}{\sqrt{\dmax}}\left(I+\frac{\varepsilon}{2} E_2\right)+\mathcal{O}(\varepsilon^2)}$. As a result, the following holds up to an additive error of $\mathcal{O}(\varepsilon^2)$:
\begin{align*}
    A_N = D^{\frac{1}{2}}A^\top D^{-\frac{1}{2}} &= \sqrt{\dmax}\left((1-\frac{\varepsilon}{2})I + \frac{\varepsilon}{2} E_1 \right)\frac{A}{\sqrt{\dmax}}\left(I+\frac{\varepsilon}{2} E_2\right)  \cr
    &=A^\top + \frac{\varepsilon }{2} \left( (E_1-I)A+AE_2 \right),
\end{align*}
i.e., $A^\top-A_N =- \frac{\varepsilon }{2} \left( (E_1-I)A^\top+A^\top E_2 \right) + \mathcal{O}(\varepsilon^2)$. In conjunction with standard bounds on singular value perturbations (e.g., see~\cite{meyer2000matrix}), this implies the following up to an additive error of $\mathcal{O}(\varepsilon^2)$:
\begin{align}\label{eq:sing_perturb}
    \sing_j(A) &= \sing_j(A^\top)\cr
    &\le \sing_j(A_N) + \frac{\varepsilon}{2}\norm{(E_1-I)A^\top+A^\top E_2 }\cr
    &\le \sqrt{\lambda_j(A_N A_N^\top)} + \frac{\varepsilon}{2}\left(\norm{E_1-I}\norm{A^\top}+\norm{A^\top}\norm{E_2}\right)\cr
    &\stackrel{(a)}\le \sqrt{\lambda_j( D^{-\frac{1}{2}}WD^{-1}W^\top D^{-\frac{1}{2}})} + \frac{\varepsilon}{2}( 1\cdot\norm{A^\top}+\norm{A^\top}\cdot 1)\cr
    &\stackrel{(b)}=\sqrt{\lambda_j(D^{-1}WD^{-1}W^\top)} + \varepsilon\sing_1(A^\top).
\end{align}
Here, $(a)$ holds because $I-E_1$ being a diagonal matrix along with $O\le I-E_1\le I$ implies that ${\norm{I-E_1}=\max_{i\in[s]}\left|(I-E_1)_{ii}\right|\le 1}$. $(b)$ holds because $\sing_1(A^\top)=\norm{A^\top}$ and because $D^{-1}WD^{-1}W^\top$ and $D^{-\frac{1}{2}}WD^{-1}W^\top D^{-\frac{1}{2}}$, being similar, have the same eigenvalues.

We now bound $\sing_1(A)$ and $\sing_2(A)$ individually. As for $\sing_1(A)$, the derivation~\eqref{eq:sing_perturb} and the fact that $\sing_1(A)=\sing_1(A^\top)$ imply that \begin{align}\label{eq:first_sing}
    \sing_1(A) \le \frac{\sqrt{\lambda_1(D^{-1}W D^{-1}W^{T} )} }{ 1 - \varepsilon} &= \sqrt{\lambda_1(D^{-1}W D^{-1}W^{T})}(1+\varepsilon) \nonumber\\
    &\quad+ \mathcal{O}(\varepsilon^2).
\end{align}
So, it is enough to bound $\lambda_1(D^{-1}W D^{-1}W^{T} )$. For this purpose, note that $A$ being column-stochastic implies that $D^{-1}W\allone=\allone$ and hence also that $W\allone = D\allone$. Besides, our assumption on in-degrees and out-degrees can be expressed as $\sum_{j=1}^sw_{ij}=\sum_{j=1}^s w_{ji}$ for each $i\in[s]$, or equivalently, $W^\top\allone=W\allone = D\allone$. As a result, we have $D^{-1}W^\top\allone=\allone$. Thus, $D^{-1}WD^{-1}W^\top = A^\top D^{-1}W^\top$ is a product of row-stochastic matrices and hence, it is row-stochastic in itself. Consequently, $\lambda_1(D^{-1}W D^{-1}W^{T} )=1$. In light of this,~\eqref{eq:first_sing} implies~\eqref{eq:first_sing_bound}.

It remains to prove~\eqref{eq:second_sing_bound}. We do this by using Theorem 2.2 of~\cite{lynn1969bounds}, which helps derive a bound in terms of $\sing_1$ and the minimum positive entry $\delta$ of the matrix $D^{-1}W D^{-1}W^{T}$. We first note that  %$D^{-1}W D^{-1}W^{T}$ has the same zero-pattern as $WW^\top$. Now, since $W$ is binary, we have $\min_{i,j:\left(D^{-1}W\right)_{ij}\ne 0}\left(D^{-1}W\right)_{ij}\ge\frac{1}{\dmax}$ as well as \\ $\min_{i,j:\left(D^{-1}W\right)_{ij}\ne 0}\left(D^{-1}W^\top\right)_{ij}\ge\frac{1}{\dmax}$. To use these inequalities to bound\\ $\delta:=\min_{i,j:(D^{-1}W D^{-1}W^{T})_{ij}>0 } \left(D^{-1}W D^{-1}W^{T}\right)_{ij}$, we observe that 
%the number of common neighbors of any two nodes $i,j\in [s]$ is at least $(2\alpha-1)s$. In other words, the number of rows of $W$ with positive entries in both rows $i$ and $j$ is at least $(2\alpha-1)s$. Moreover\todo{probably delete these zero-pattern sentences}, we can replace $W$ with $D^{-1}W$ in the previous sentence because $D^{-1}$ being a diagonal matrix preserves the zero pattern of $W$. Similarly, $D^{-1}W D^{-1}W^{T}$ has the same zero pattern as $WW^\top$. Therefore, the following holds for all $i,j\in[s]$:
\begin{align*}
    &(D^{-1}W D^{-1}W^{T})_{ij} \stackrel{(a)}\ge \frac{1}{({\dmax})^2}\sum_{k=1}^s (W)_{ik}(W^\top)_{kj}\cr
    &= \frac{1}{({\dmax})^2}|\{k\in [s]: w_{ik}=w_{jk}=1\}| \stackrel{(b)}\le \frac{(2\alpha-1)s}{ ({\dmax})^2},
\end{align*}
where $(a)$ follows from the fact that $D^{-1}\ge \frac{1}{\dmax}I$ and $(b)$ holds because the number of common out-neighbors of any two nodes $i,j\in [s]$ is at least $(2\alpha-1)s$. We can now apply Theorem 2.2 of~\cite{lynn1969bounds} by setting $x=\frac{1 }{ \sqrt{s} }$ in the theorem (because $\frac{1}{\sqrt{s}}\allone$, as explained above, is the unit-norm principal eigenvector of $D^{-1}W D^{-1}W^\top$). Thus,
\begin{align}\label{eq:alpha_bound}
    \lambda_2(D^{-1}W D^{-1}W^\top)&\le \lambda_1(D^{-1}W D^{-1}W^\top) - \frac{(2\alpha-1)s^2}{{(\dmax)}^2}\cr
    &= 1 -\left(\frac{2}{\alpha} - \frac{1}{\alpha^2}\right)(1-2\varepsilon) + \mathcal{O}(\varepsilon^2),
\end{align}
where the last step follows from the observation that $\dmax=\frac{\alpha s}{1-\varepsilon}$. 

Combining~\eqref{eq:first_sing_bound},~\eqref{eq:sing_perturb} and~\eqref{eq:alpha_bound} now gives
\begin{align*}
    \sing_2(A)\le \sqrt{1 - \left(\frac{2}{\alpha} - \frac{1}{\alpha^2}\right)(1-2\varepsilon) +\mathcal{O}(\varepsilon^2)} + \varepsilon\left(1+\varepsilon+\mathcal{O}(\varepsilon)^2\right).
\end{align*}
Squaring both sides and rearranging the terms results in~\eqref{eq:second_sing_bound}.
\end{proof}
\newtheorem{remark}{Remark}

\begin{remark}
Observing the bounds in Proposition~\ref{prop:sing_bounds},  we can see that setting $\alpha =1$, which corresponds to $G$ being a clique, in the  bounds yields $\sing_1\le 1+\mathcal{O}(\varepsilon)$ and $\sing_2=\mathcal{O}(\varepsilon)$. These inequalities, for $\varepsilon\ll 1$, are tight with respect to the well-known lower bounds  $\sing_1\ge 1$ and $\sing_2\ge 0$. This implies that the bounds~\eqref{eq:first_sing_bound} and~\eqref{eq:second_sing_bound} can be expected to be reasonably tight for high edge density (i.e., whenever $\alpha\approx 1$). Another implication of the bounds is decreasing $\varepsilon$, which inherently measures how irregular the digraph is, leads to~\eqref{eq:second_sing_bound} becoming sharper. 
\end{remark}
\def\dinmax{d^{\text{in}}_{\max}}
\def\dmax{d_{\text{max}}}

The above singular value bounds are especially tight for digraphs that are approximately regular (or digraphs that do not exhibit significant variations in their in-degrees and out-degrees) since such digraphs satisfy $\varepsilon\ll1$. This happens in practice, when the D2D clusters are dense (e.g., in the wireless setting, when the nodes are closer to each other or when they can move and communicate over time).
% In practice, this is achieved in the presence of intra-cluster client mobility. For example, consider our UAV networks for ISR use-case from Section 1. Suppose every UAV moves within its coverage area and communicates with every other UAV that enters its sensing radius. If the movement patterns of all UAVs within the cluster are designed to be similar, every UAV will eventually communicate with approximately the same number of others during the course of this movement. 
Furthermore, the same holds for the condition on $\alpha$ in Proposition~\ref{prop:sing_bounds} (i.e., $\alpha>\frac{1}{2}$), which is always met when the clusters are dense.
% If, in addition, the coverage area (and hence the cluster size) is small and the duration of intra-cluster aggregations is sufficiently long, then every client will come across, and eventually communicate with, a number of clients that is large enough to satisfy $\alpha>\frac{1}{2}$, the condition that the minimum out-degree 
%  of each cluster exceeds half the cluster size.
% because of design similarities and objective homogeneity

However, the bounds~\eqref{eq:first_sing_bound} and~\eqref{eq:second_sing_bound} are obtained under the assumption  that every node has its in-degree equal to its  out-degree, which can be restrictive in practical settings. This observation further motivates us to find a new set of singular value bounds that work well under milder assumptions. We thus provide the following bounds, which not only relax the said restrictive assumption, but also apply to digraphs with more general out-degree distributions (and hence subsume digraphs with wider out-degree variations). 
\begin{proposition} \label{prop:lots_of_notation} Let $\varphi=\frac{d^{\text{in}}_{\max} - d_{\min}}{d_{\min}}$, where $\dinmax$ denotes the maximum in-degree of the digraph $G$. If $\alpha\ge \frac{1}{2}$, we have the following bounds:
\begin{align}
    \sing_1^2&\le 1+\varphi,\label{eq:general_first_sing_bound}\\
    \sing_2^2&\le 1+\varphi-\frac{(1-\varepsilon)^2(1-\alpha_{-1}^2)\left((1-\varepsilon)^2(1-\alpha_{-1}^2)-\alpha_{-1}\right)}{s(\varepsilon_{\text{net}}+1)\left(\varepsilon_{\text{net}} - \alpha_{-1} + \frac{1}{\alpha s} \right )}, \label{eq:general_second_sing_bound}
\end{align}
where $\varepsilon_{\text{net}}:=\varphi+\frac{\varepsilon}{\alpha}$ and $\alpha_{-1}:=\frac{1}{\alpha}-1$.
\end{proposition}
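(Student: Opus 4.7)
The first inequality is a short exercise. Since $A$ is column-stochastic, $\|A\|_1=1$, and the $i$th row of $A$ is supported on $\mathcal{N}_i^-$ with entries $1/d_j^+$, so $\|A\|_\infty = \max_i \sum_{j\in\mathcal{N}_i^-}1/d_j^+ \le \max_i d_i^-/d_{\min} = d^{\text{in}}_{\max}/d_{\min} = 1+\varphi$. Combining these gives $\sigma_1^2 = \|A\|_2^2 \le \|A\|_1\|A\|_\infty \le 1+\varphi$, which is~\eqref{eq:general_first_sing_bound}.

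For the second inequality, my plan is to mimic the structural template of the proof of Proposition~\ref{prop:sing_bounds} and apply Theorem~2.2 of~\cite{lynn1969bounds} to the symmetric nonnegative matrix $M := A^\top A = D^{-1} W W^\top D^{-1}$, whose eigenvalues are $\{\sigma_j(A)^2\}$ and whose $(i,j)$-entry is $|\mathcal{N}_i^+\cap\mathcal{N}_j^+|/(d_i^+\,d_j^+)$. Working with $A^\top A$ rather than $AA^\top$ is essential because only the former lets the hypothesis $\alpha\ge\tfrac12$---an assumption on \emph{out}-degrees---translate into a nontrivial common-neighbor lower bound. Lynn's bound takes the schematic form $\lambda_2(M)\le\lambda_1(M)-c\,\min_{ij}M_{ij}$ with $c$ depending on the extremal entries of the Perron eigenvector of $M$, so three pieces of data must be assembled: (i) the bound $\lambda_1(M)=\sigma_1^2\le 1+\varphi$ from the first part; (ii) a lower bound on $\min_{ij}M_{ij}$; and (iii) an estimate of the Perron eigenvector of $M$. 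For (ii), $\alpha\ge\tfrac12$ enables the inclusion--exclusion estimate $|\mathcal{N}_i^+\cap\mathcal{N}_j^+|\ge d_i^++d_j^+-s\ge(2\alpha-1)s$; combined with $d_i^+,d_j^+\le d_{\min}/(1-\varepsilon)$ and the algebraic identity $(2\alpha-1)/\alpha^2=1-\alpha_{-1}^2$, this yields $\min_{ij}M_{ij}\ge(1-\varepsilon)^2(1-\alpha_{-1}^2)/s$, which is exactly the factor appearing in the numerator of~\eqref{eq:general_second_sing_bound}.

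Step (iii) is the decisive obstacle and is what makes this proposition genuinely harder than Proposition~\ref{prop:sing_bounds}. In that earlier result the extra hypothesis $d_i^+=d_i^-$ forced the analogous matrix to be row-stochastic, so its Perron eigenvector was simply $\mathbf{1}/\sqrt{s}$ and Lynn's inequality applied off the shelf. Without that regularity, $M$ is not row-stochastic---indeed $(M\mathbf{1})_i=\frac{1}{d_i^+}\sum_{j\in\mathcal{N}_i^+}\sum_{k\in\mathcal{N}_j^-}1/d_k^+$ depends nontrivially on both in- and out-degree sequences---and its Perron eigenvector is non-uniform. The plan is to bound the ratio of its largest to smallest entry by a single power-iteration step from $\mathbf{1}/\sqrt{s}$, exploiting the closed form of $M\mathbf{1}$ just displayed: the in-degree spread $\varphi$ and the rescaled out-degree spread $\varepsilon/\alpha$ combine additively, yielding the combined heterogeneity parameter $\varepsilon_{\text{net}}=\varphi+\varepsilon/\alpha$ that controls the ratio $\max/\min\le 1+\varepsilon_{\text{net}}$. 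Plugging this eigenvector-ratio estimate together with the min-entry bound of step~(ii) and the $\lambda_1$ bound of step~(i) into Lynn's theorem and simplifying via $s/d_{\min}=1+\alpha_{-1}$ should collapse the algebra to the denominator $s(\varepsilon_{\text{net}}+1)\bigl(\varepsilon_{\text{net}}-\alpha_{-1}+1/(\alpha s)\bigr)$ and the numerator $(1-\varepsilon)^2(1-\alpha_{-1}^2)\bigl((1-\varepsilon)^2(1-\alpha_{-1}^2)-\alpha_{-1}\bigr)$ that appear in~\eqref{eq:general_second_sing_bound}. The hardest bookkeeping lies precisely in step~(iii): carefully tracing how the in- and out-degree heterogeneities interact in the power-iteration bound is what produces the combined parameter $\varepsilon_{\text{net}}$ rather than $\varphi$ or $\varepsilon$ in isolation.
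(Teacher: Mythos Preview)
Your argument for~\eqref{eq:general_first_sing_bound} via $\sigma_1^2\le\|A\|_1\|A\|_\infty$ is clean and correct. For~\eqref{eq:general_second_sing_bound}, the structural plan---work with $M=A^\top A$ rather than $AA^\top$ so that the out-degree hypothesis $\alpha\ge\tfrac12$ yields a positive matrix, lower-bound $\min_{ij}M_{ij}$ by inclusion--exclusion on common out-neighbors, and then invoke the Lynn--Timlake eigenvalue-gap inequality---is the right template and, judging from the paper's proof of Proposition~\ref{prop:sing_bounds} together with the form of the final bound, is essentially the approach the appendix takes.

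The genuine gap is exactly where you flag ``the hardest bookkeeping'': step~(iii). A single power-iteration step from $\mathbf 1$ does \emph{not}, by itself, bound the Perron-vector ratio $v_{\max}/v_{\min}$ of a positive matrix; the row-sum ratio of $M\mathbf 1$ and the eigenvector ratio are different objects, and the former does not control the latter in general. Moreover, your claimed row-sum bound is itself incomplete: while $(M\mathbf 1)_i\le 1+\varphi$ follows easily from $\sum_{j\in\mathcal N_k^-}1/d_j^+\le d_k^-/d_{\min}$, a matching \emph{lower} bound on $(M\mathbf 1)_i$ cannot come from bounding each inner sum $\sum_{j\in\mathcal N_k^-}1/d_j^+$ separately, because the hypotheses give no lower bound on individual in-degrees $d_k^-$; one has to aggregate via $\sum_j|\mathcal N_i^+\cap\mathcal N_j^+|$ and inclusion--exclusion again, and the outcome is not simply $1/(1+\varepsilon_{\text{net}})$. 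What is actually needed is a rigorous entrywise bound on the Perron eigenvector of $M$---and Lynn and Timlake's paper~\cite{lynn1969bounds} provides precisely such bounds (they are in the title). Combining one of those eigenvector estimates with their Theorem~2.2, the min-entry bound $\min_{ij}M_{ij}\ge(1-\varepsilon)^2(1-\alpha_{-1}^2)/s$, the max-entry bound $\max_{ij}M_{ij}\le 1/d_{\min}=1/(\alpha s)$, and the row-sum bound $1+\varphi$ is what produces the specific denominator $s(\varepsilon_{\text{net}}+1)\bigl(\varepsilon_{\text{net}}-\alpha_{-1}+1/(\alpha s)\bigr)$ in~\eqref{eq:general_second_sing_bound}. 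Replace the power-iteration heuristic with an explicit citation of the relevant Perron-vector bound and carry the algebra through; the rest of your outline stands.
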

The bounds obtained in Proposition~\ref{prop:lots_of_notation} (\textbf{proved in the appendix}) are particularly effective when the D2D cluster digraphs are dense but irregular. This is often the case in practical systems, when there is communication heterogeneity (e.g., in wireless sensor networks consisting of sensors with different radii).

In conjunction with Theorem~\ref{thm:main}, the bounds derived in Propositions~\ref{prop:sing_bounds} and~\ref{prop:lots_of_notation} capture the inherent  dependence of the expected optimality gap, and hence that of the convergence rate, on the degree distributions of the D2D clusters. In particular, upon having approximately regular D2D clusters, the bounds in Propositions~\ref{prop:sing_bounds} along with the result of Theorem~\ref{thm:main} determine the convergence rate of Algorithm 1. The same holds when using the result of Proposition~\ref{prop:lots_of_notation} with Theorem~\ref{thm:main}, which will characterize the convergence rate of Algorithm 1 upon having irregular D2D clusters.

\section{Numerical Validation} %\todo{Change y axis to kilo-joules 25 dbm rate calculation do it refer to TT-HF}
We now conduct numerical experiments to validate our methodology. Overall, our simulations show that compared with baselines, Algorithm 1 obtains significant reductions in total communication cost for the same or similar levels of testing accuracy.
%simulate both Algorithm 1 and FedAvg with different values of $\phi_{\max}$ and $m(t)$, where $m(t)$ is held constant in time in our implementation of FedAvg. Our

\subsection{Implementation}

\subsubsection{Network Architecture}
% Motivated by our example on UAV networks with small-sized clusters performing ISR over distinct coverage areas while coordinating with a far-away base station, 
We simulate a network consisting of $n=70$ edge devices partitioned into $c=7$ clusters with $n_\ell = 10$ nodes per cluster. In every global aggregation round, the digraph for each cluster $\ell\in[c]$ is constructed as follows: (i) we generate a $k$-regular directed graph (a digraph in which every node has its in-degree and out-degree equal to $k$) with the value of $k$ being chosen uniformly at random from the set $\{6,\ldots, 9\}$; (ii) we delete a fraction $p\in(0,1)$ of the directed edges uniformly at random so as to incorporate D2D link failures due to client mobility and bandwidth issues. The result is an approximately regular digraph whose degree distribution may deviate significantly from that of regular digraphs, while satisfying $\alpha_\ell(t)>\frac{1}{2}$.

%, as both of them are used widely in the literature
\subsubsection{Datasets}  All our simulations are performed on  MNIST~\cite{yan1998mnist} and Fashion-MNIST
(F-MNIST)~\cite{xiao2017fashion} datasets. The MNIST dataset consists of 70K images (60K
for training and 10K for testing), and each image is a  hand-written digit between 0 to 9 (i.e., the dataset has 10 labels). The same applies to the FMNIST dataset, the only difference being that it consists of images of fashion products.
% Due to space constraints, we present our results obtained using MNIST here, and
% refer the reader to the supplementary material for our results obtained using FMNIST, which have a very similar interpretation. 

\subsubsection{ML Models and Implementation} We use the neural network model from~\cite{mcmahan2017communication} in our simulations. In particular, we use a convolutional neural network (CNN) with two $5\times 5$ convolution layers, the first of which has 32 channels and the second 64 channels, where each of these layers precedes a $2\times2$ max pooling, resulting in a total model dimension of $1,663,370$. We use the PyTorch implementation of this setup provided in~\cite{ji2018pytorch} with cross-entropy loss. Each dataset is distributed among the clients in a non-i.i.d. manner: the samples (from either of the two datasets) are first sorted by their labels, partitioned into chunks of equal size, and each of the 70 clients is assigned only two chunks (i.e., each client will end up having only two labels). This results in extreme data heterogeneity, which leads to strong empirical guarantees for our approach. 

% All of our code is available at \textbf{??}

All of our simulations are performed using the following hyperparameter values/ranges: $\tloc = 5$, $t_{\max}\in\{15,30\}$, $p\in\{0.1,0.2\}$, and $\eta_t = 0.02(0.1)^t$ where $t$ is the global aggregation index.

%consider different learning tasks and D2S connectivity levels and 
\subsection{Results}
We compare the energy vs. accuracy trade-offs associated with Algorithm 1 with those associated with two baselines, FedAvg~\cite{mcmahan2017communication} and collaborative relaying (COLREL)~\cite{yemini2022semi}. The second baseline is a recently proposed semi-decentralized FL algorithm that incorporates single-step consensus updates. Under the D2D and D2S connectivity constraints introduced in Section~\ref{sec:prob_form}, COLREL is a variant of FedAvg that incorporates one round of column-stochastic D2D aggregations before every global aggregation round but does not provide any criterion to control the sampling size $m$, which we assume to be fixed throughout its implementation. The fundamental difference between our method and COLREL is that our method takes into account the change in the connectivity of D2D clusters, optimally tuning the value of $m$ according to the set of novel upper bounds on the singular values we obtained in Section~\ref{sec:sing}.

We consider these tradeoffs under different D2S connectivity levels. Intuitively speaking, on one hand, as the D2S connectivity improves, we expect to see that our algorithm leads to a lower energy and cost savings as compared to FedAvg. This is because our algorithm will naturally collapse to FedAvg and D2D communications will become less useful since more devices would engage in uplink communications, which by itself degrades the benefit of D2D local aggregations. On the other hand, as D2S connectivity improves, we expect to see that our algorithm achieves significant energy savings as compared to COLREL. This is because the impact of tuning $m$ becomes more prominent when there is a possibility of D2S communications.

All of the following plots and discussion are based on the assumption that  the ratio of the energy required for D2D communication to that of up-link (D2S) transmission, denoted by $\frac{E_{\text{D2D}}}{E_{\text{Glob}}}$, equals $0.1$. This is a pessimistic estimate in favor of D2S considering that most ratios reported in the literature~\cite{zhang2017security,hmila2019energy,lin2021semi} take values less than $0.1$. Thus, the communication costs reported are $(\#\text{D2S transmissions}) + 0.1 \times (\#\text{D2D transmissions})$.

%In the next subsection, we plot the testing accuracy as a function of the communication cost (which is assumed to be proportional to the average energy) with four different values of the ratio $\frac{E_{\text{D2D}}}{E_{\text{Glob}}}$ reported in the literature \textbf{cite??}.

\begin{figure}[t]\label{fig:52_57}
    \centering
    \includegraphics[scale=0.195] {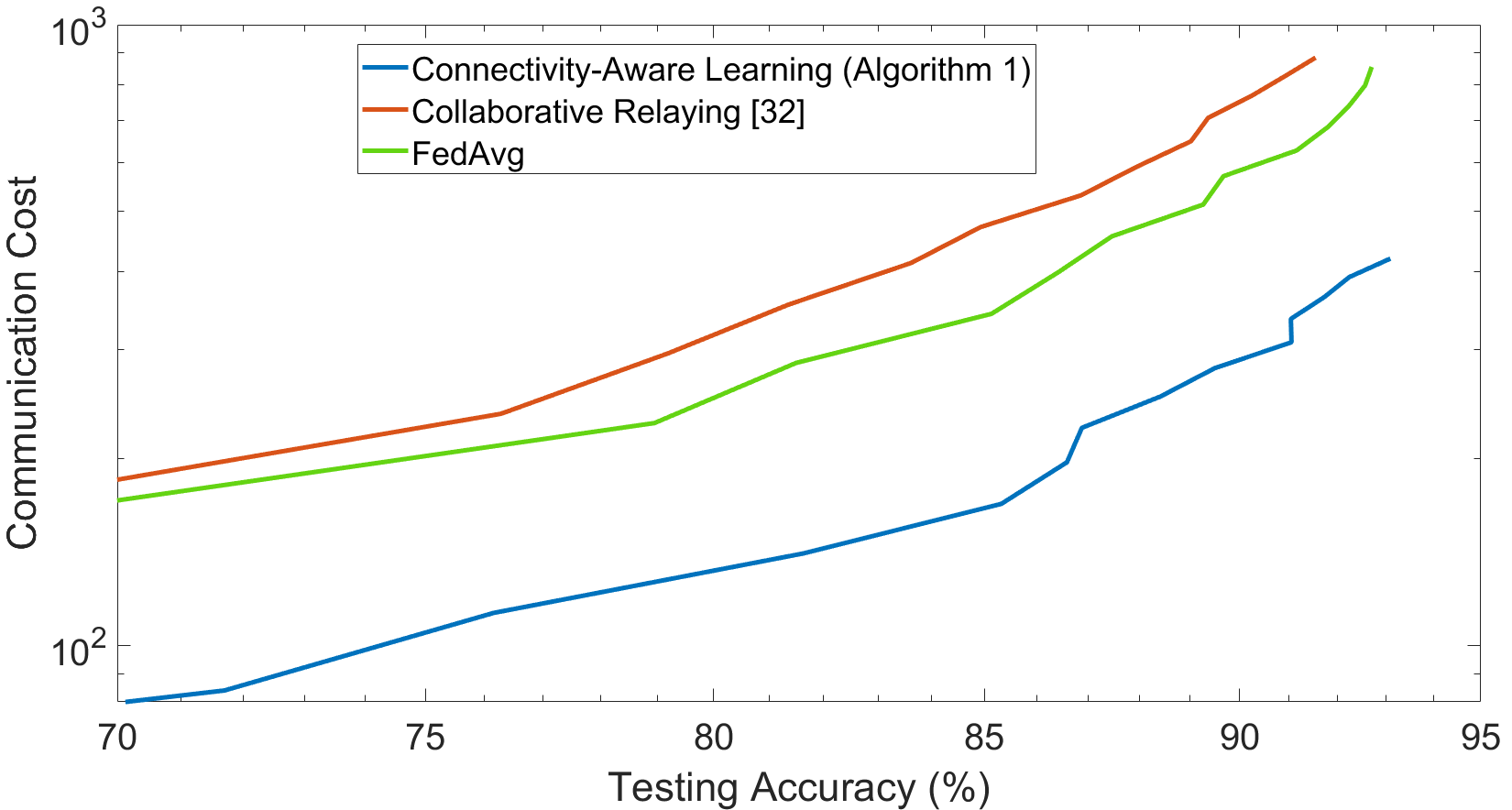}
    \caption{Communication cost vs. testing accuracy under high D2S connectivity (Dataset: MNIST).}
    \label{fig:my_label}
\end{figure}
\begin{figure}[t]\label{fig:fmnist_52_57}
    \centering
    \includegraphics[scale=0.195]{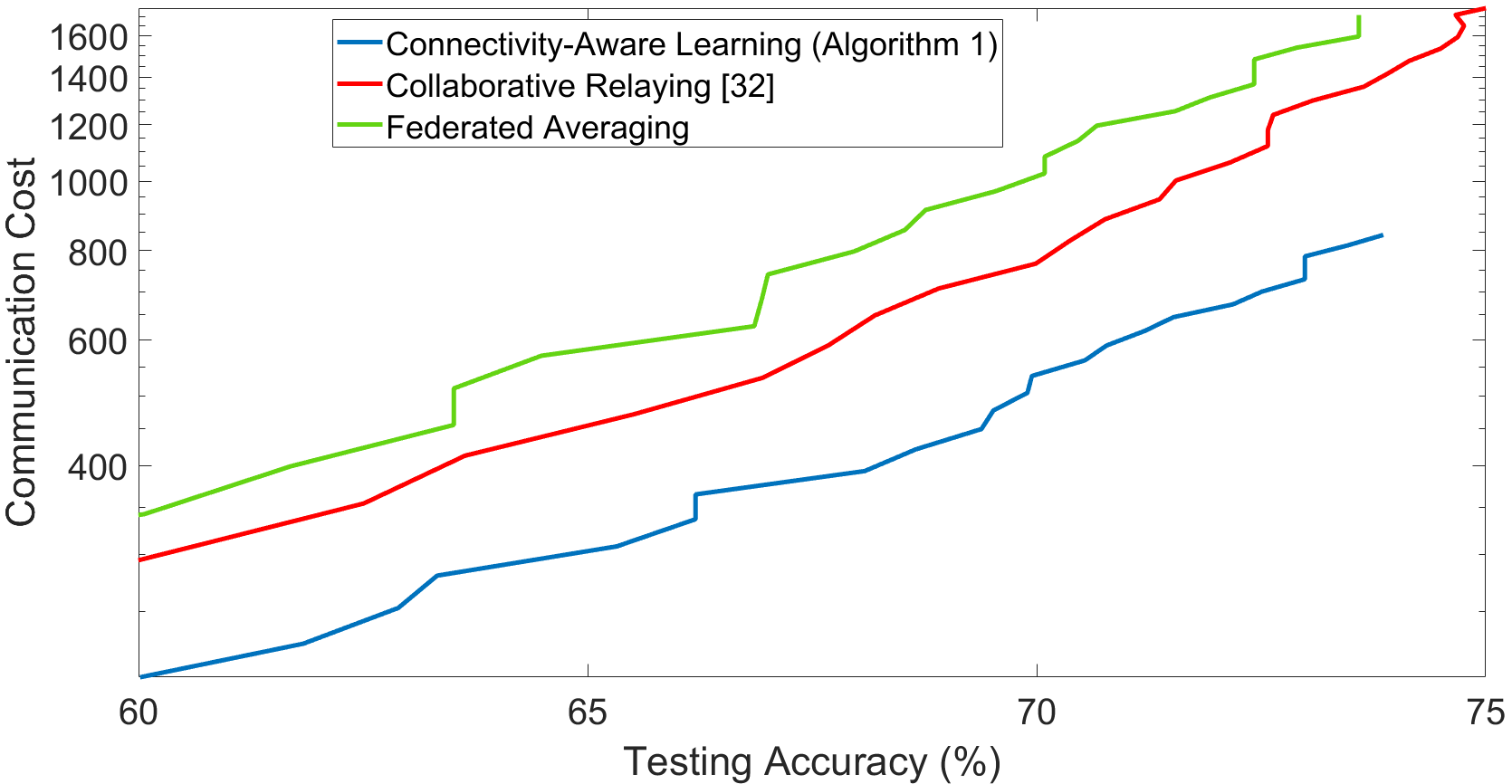}
    \caption{Communication cost vs. testing accuracy under high D2S connectivity (Dataset: F-MNIST).}
    \label{fig:my_label}
\end{figure}

\subsubsection{Case 1: Cost savings under high D2S connectivity and a low link failure probability}
When the PS has a high downlink bandwidth and the connectivity between the devices and the PS is reliable, implementing FedAvg or COLREL has the effect of setting $m$ to a value close to $n$. As an example, we implement FedAvg and COLREL with $m=57$ and $m=52$, respectively (note that COLREL requires fewer up-link transmissions because it uses D2D consensus updates in addition to global aggregations). The results for MNIST are shown in Fig. 2: choosing $\phi_{\max}=0.06$ and a low D2D link failure probability $p=0.1$ results in Algorithm 1 achieving a testing accuracy of $90$\% while consuming about $46\%$ less energy than FedAvg (thereby incurring proportionately lower communication costs). With respect to COLREL, the energy saving is even higher because COLREL also expends energy on D2D aggregations with relatively little gain in testing accuracy.

Repeating this experiment on FMNIST results in a similar performance, depicted in Fig. 3. We see that Algorithm 1 (with $\phi_{\max}=0.06$) consumes about $30$\% less energy than COLREL for achieving a testing accuracy of 70\%. 

%We simulate Algorithm 1 and FedAvg as follows: we first perform 10 runs of Algorithm 1 after setting the value of the connectivity threshold to $0.15$, we then compute the total energy consumption averaged over all the 10 runs, and we choose a value of $m_{\text{FedAvg}}$ (the static number of clients sampled by the PS in any implementation of FedAvg) in such a way that the total energy consumed by an average implementation Algorithm 1 is $70\%$ of the total energy consumed by an average implementation of FedAvg. The plot shows that, from the viewpoint of FedAvg, our method results in a comparable testing accuracy as a function of the global aggregation index while saving 30$\%$ of the communication energy at the same time.

\begin{figure}[t]\label{fig:26_15}
    \centering\includegraphics[scale=0.195]{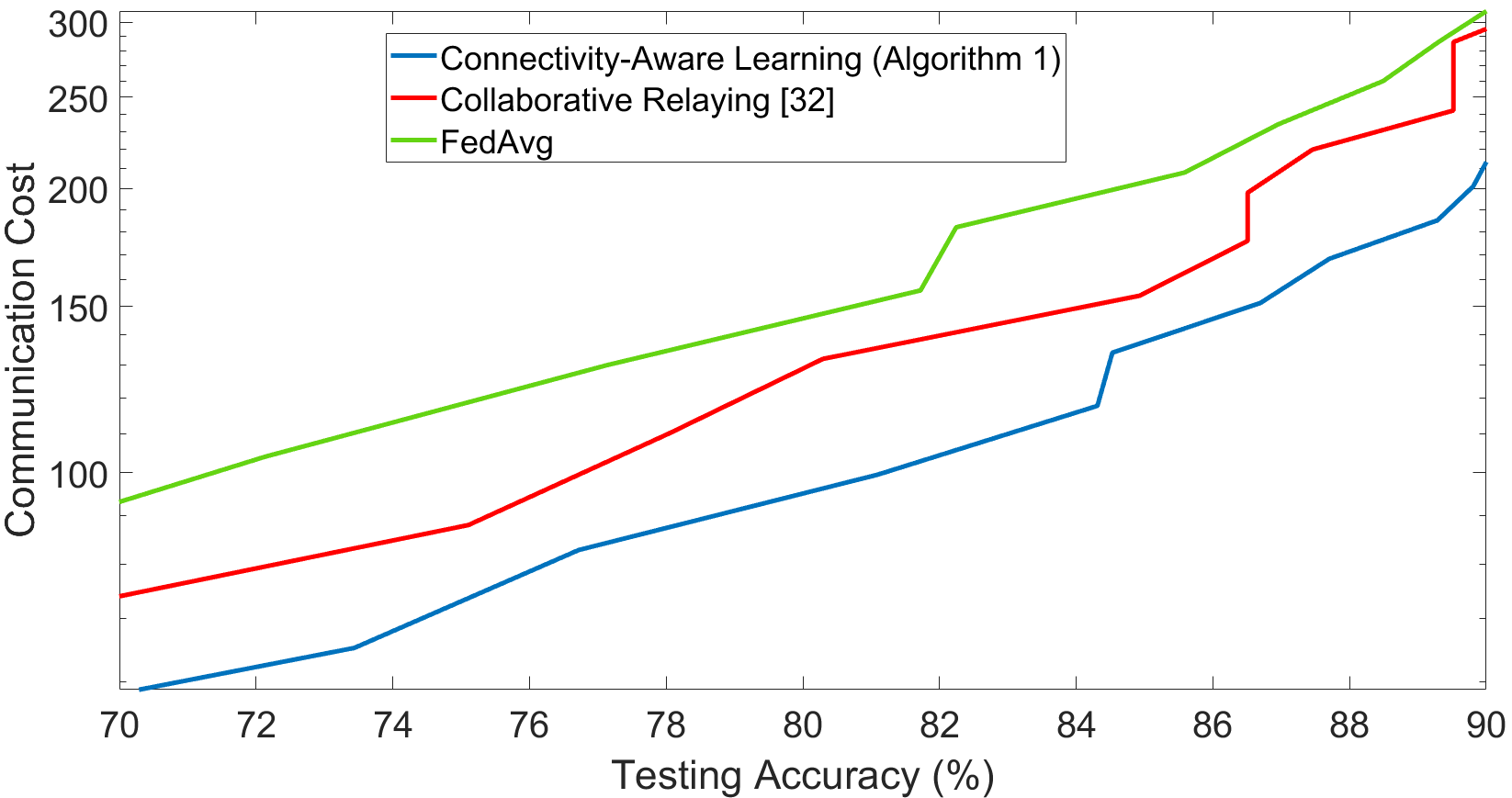}
    \caption{Communication cost vs. testing accuracy under low D2S connectivity (Dataset: MNIST).}
    \label{fig:my_label}
\end{figure}

\begin{figure}[t]\label{fig:32_point_one}
    \centering
    \includegraphics[scale=0.195]{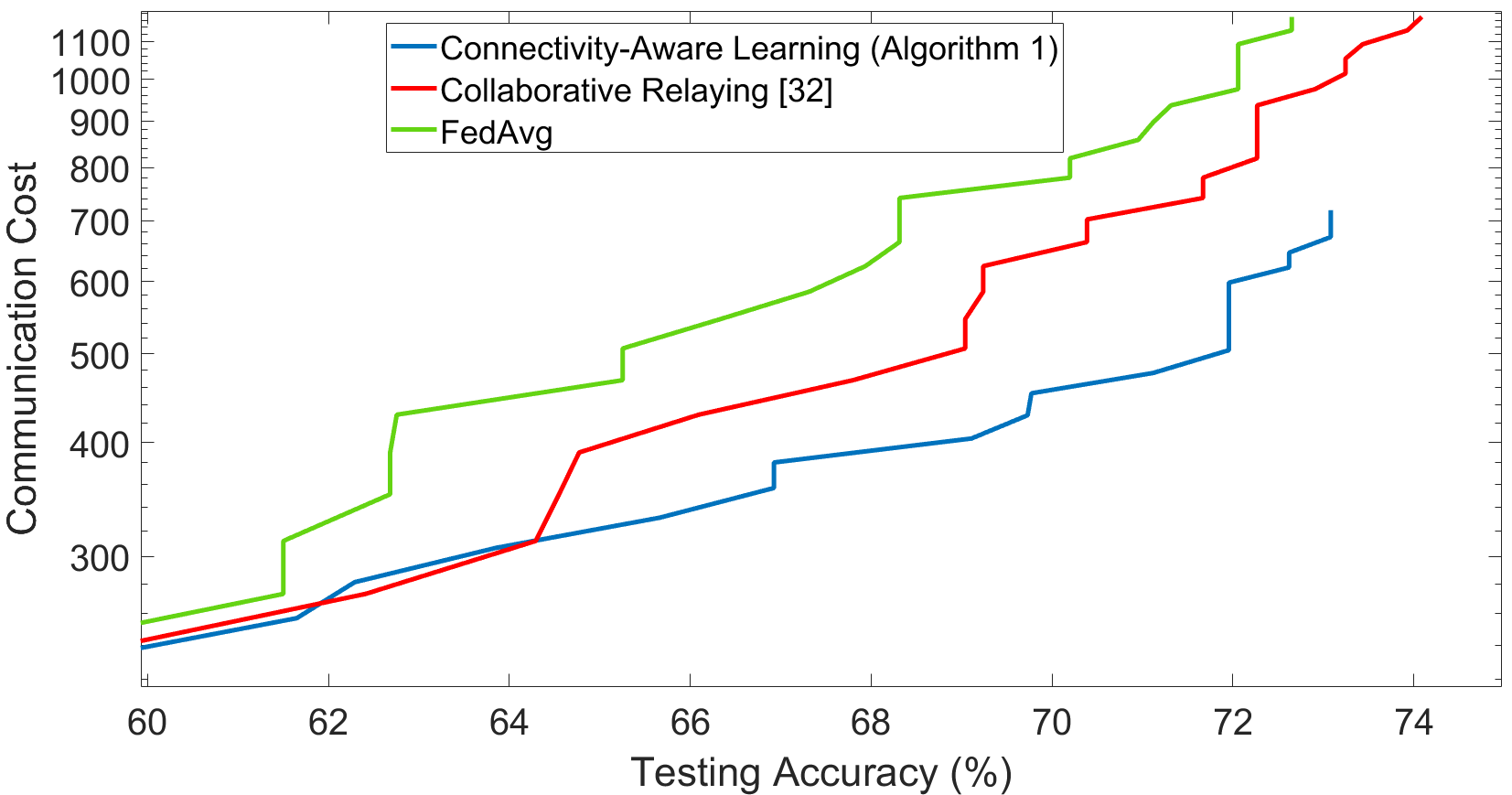}
    \caption{Communication cost vs. testing accuracy under low D2S connectivity (Dataset: F-MNIST).}
    \label{fig:my_label}
\end{figure}

\subsubsection{Case 2: Cost savings under low D2S connectivity and a high link failure probability}
When the connectivity between the devices and the PS is poor, implementing FedAvg or COLREL has the effect of setting $m$ to a value significantly smaller than $n$. As an example, we implement FedAvg and COLREL with $m=26$ and $m=15$, respectively. Choosing $\phi_{\max}=0.2$ and a high D2D link failure probability $p=0.2$ results in Algorithm 1 consuming about $30\%$ less energy than FedAvg for achieving a testing accuracy of $90$\% on MNIST, as shown in Fig.~4. The cost saving is lower than in Case 1 as we expect because the singular value bounds incorporated by our algorithm into its choice of $m(t)$ are looser for higher values of the link failure probability $p$. Repeating this experiment on FMNIST results in a similar performance, depicted in Fig.~5.

\section{Conclusion}
We have investigated consensus-based semi-decentralized learning over clustered D2D networks modeled as time-varying digraphs. We first revealed the connection between the singular values of the column-stochastic matrices used for D2D model aggregations and 
the convergence rate of the learning process. We then derived a set of novel upper bounds on these singular values in terms of the degree distributions of the cluster digraphs, and we used the resulting bounds to design a novel connectivity-aware FL algorithm that enables the central parameter server to tune the number of up-link transmissions by using its knowledge of the time-varying degree distributions of clusters. Our algorithm maintains a continuous balance between the number of model aggregations occurring at the server and those occurring over the edge network, thereby enhancing the resource efficiency of the learning process without compromising convergence.

Future works include obtaining upper bounds on singular values under more general assumptions, and obtaining optimal device sampling schemes for irregular clusters.

\bibliographystyle{ACM-Reference-Format}
\bibliography{bib}

\newpage

\end{document}